\renewcommand\paragraph{\@startsection{paragraph}{4}{\z@}%
            {-2.5ex\@plus -1ex \@minus -.25ex}%
            {1.25ex \@plus .25ex}%
            {\normalfont\normalsize\bfseries}}
\def\qed{\rule{2mm}{2mm}}
\def\indep{\perp \!\!\!\!\perp}
\newtheorem{lemma}{Lemma}[section]
\newtheorem{assumption}{Assumption}[section]
\newtheorem{algorithm}{Algorithm}[section]
\newtheorem{remark}{Remark}[section]
\let\oldmarginpar\marginpar
\renewcommand{\marginpar}[2][rectangle,draw,fill=black, text=white,text width= 2cm,rounded corners]{
    \oldmarginpar{
    \tiny \tikz \node at (0,0) [#1]{#2};}
    }
\begin{document}
\author{
Yong Cai \\ 
Department of Economics \\ 
Northwestern University \\ 
\url{yongcai2023@u.northwestern.edu}
\and
Ivan A.\ Canay \\ 
Department of Economics \\ 
Northwestern University \\ 
\url{iacanay@northwestern.edu}
\and
Deborah Kim \\ 
Department of Economics \\ 
Northwestern University \\ 
\url{deborahkim@u.northwestern.edu}
\and
Azeem M.\ Shaikh\\
Department of Economics\\
University of Chicago \\
\url{amshaikh@uchicago.edu}
}

\bigskip

\title{On the Implementation of Approximate Randomization Tests in Linear Models with a Small Number of Clusters\thanks{We would like to thank Matthew Thomas for excellent research assistance developing the \texttt{R} and \texttt{Stata} packages for this paper. The research of the fourth author is supported by NSF Grant SES-1530661.}}

\maketitle

%\vspace{-0.3in}

\begin{spacing}{1.1}
\begin{abstract}
This paper provides a user's guide to the general theory of approximate randomization tests developed in \cite{canay/etal:17} when specialized to linear regressions with clustered data. An important feature of the methodology is that it applies to settings in which the number of clusters is small -- even as small as five.  We provide a step-by-step algorithmic description of how to implement the test and construct confidence intervals for the parameter of interest. In doing so, we additionally present three novel results concerning the methodology: we show that the method admits an equivalent implementation based on weighted scores; we show the test and confidence intervals are invariant to whether the test statistic is  studentized or not; and we prove convexity of the confidence intervals for scalar parameters. We also articulate the main requirements underlying the test, emphasizing in particular common pitfalls that researchers may encounter.  Finally, we illustrate the use of the methodology with two applications that further illuminate these points: one to a linear regression with clustered data based on \cite{mengQY2015} and a second to a linear regression with temporally dependent data based on \cite{munyo2015first}. The companion {\tt R} and {\tt Stata} packages facilitate the implementation of the methodology and the replication of the empirical exercises. 
\end{abstract}
\end{spacing}

\noindent KEYWORDS: Randomization tests, linear regression, clustered data, time series

\noindent JEL classification codes: C12, C14

\thispagestyle{empty} 
\newpage
\setcounter{page}{1}

\section{Introduction}

% OTHER REFERENCES: \cite{NeweyandWest1987}, \cite{newey1994automatic}, \cite{AndrewsETA1991}, \cite{fama1973risk}

This paper provides a user's guide to the general theory of approximate randomization tests (ARTs) developed in \cite{canay/etal:17} when specialized to linear regressions with clustered data.  Here, clustered data refers to data that may be grouped so that there may be dependence within each cluster, but distinct clusters are approximately independent in a way to be made precise below.  Such data is remarkably common, including not only data that are naturally grouped into clusters, such as villages or repeated observations over time on individual units, but also data with weak temporal dependence, in which pseudo-clusters may be formed using blocks of consecutive observations.  An important feature of the methodology is that it applies to commonly encountered settings in which the number of clusters is small -- even as small as five.  In this respect, the proposed methodology contrasts sharply and meaningfully with many commonly employed methods for inference in such settings.  We briefly elaborate on this point in our discussion of related literature below.

A principal\textbf{} goal of this paper is to make the general theory developed in \cite{canay/etal:17} more accessible by providing a step-by-step algorithmic description of how to implement the test and construct confidence intervals for the quantity of interest in these types of settings. In order to do so, we develop three novel results concerning the methodology in Section \ref{sec:results}. Our first result shows that what we view as the most natural implementation of the test, as described in Algorithm \ref{algo:crs1}, is numerically equivalent to an alternative implementation based on weighted scores (see Algorithm \ref{algo:crs2}). Our second result shows that when the parameter of interest is a scalar parameter, studentizing or not the $t$-statistic entering the test does not affect the results of the test or the associated confidence intervals. We therefore focus on the unstudentized statistic in Algorithm \ref{algo:crs1}. Finally, our third result shows that the confidence sets for scalar parameters that are conceptually described by test inversion are indeed a closed interval of the real line. This further leads to a simple closed form expression for the lower and upper bound of the confidence intervals (see Algorithm \ref{algo:crs2}). These results are new to this paper and play an important role in developing simple algorithms for the implementation of ARTs.

We additionally provide a discussion of the main requirements underlying the test in Section \ref{sec:requirements}.  These requirements essentially demand that the quantity of interest is suitably estimable cluster-by-cluster.  As discussed further in Section \ref{sec:requirements}, when this is not satisfied, a researcher need not conclude that it is not possible to exploit the results in \cite{canay/etal:17}.  Instead, several remedies are possible, including clustering more coarsely or changing the specification to ensure that this requirement is satisfied.  We provide two applications that further elucidate these points: one to a linear regression with clustered data based on \cite{mengQY2015} and a second to a linear regression with temporally dependent data based on \cite{munyo2015first}.  The required software to replicate these empirical exercises and to aid researchers wishing to employ the methods elsewhere is provided in both {\tt R} and {\tt Stata}.\footnote{The {\tt Stata} and {\tt R} packages {\tt ARTs} can be downloaded from \hyperlink{http://sites.northwestern. edu/iac879/software/}{http://sites.northwestern. edu/iac879/software/}.}

The methodology described in this paper is part of a large and active literature on inference with clustered data.  Following \cite{bertrand2004much}, researchers are acutely aware of the need to adjust inferences appropriately to account for this sort of dependence.  Many of the most commonly employed methods for doing so, however, are inadequate for the unusually common situation in which the number of clusters is small.  Conventional wisdom suggests that the number of clusters is small when it is less than forty.  For example, the method described in \cite{liang1986longitudinal}, which has enjoyed considerable popularity due to its availability in software packages such as {\tt Stata}, is widely acknowledged to perform poorly when this rule-of-thumb is not satisfied.  Similarly, the cluster wild bootstrap described in \cite{cameron2008bootstrap} requires either a sufficiently large number of clusters or, as shown by \cite{canay/santos/shaikh:20}, stringent homogeneity across clusters, to perform reliably.  As explained further in Section \ref{sec:requirements}, the methods developed in \cite{canay/etal:17} and described in this paper, require neither a large number of clusters nor such homogeneity across clusters.  We note that the methods by  \cite{ibragimov2010t,ibragimov/muller:16}, which are closely related to the ones described here, also do not require such restrictions, but are generally less powerful and permit testing a less rich variety of hypotheses.  See \cite{canay/etal:17} for further discussion of these points as well as \cite{conley2018inference} for an insightful and thorough review of the related literature more broadly.

The remainder of this paper is organized as follows.  In Section \ref{sec:setup}, we first formalize the setting and establish some notation.  We then describe the implementation of approximate randomization tests (ARTs) in an algorithmic fashion, including how to use these tests to construct confidence intervals for the quantity of interest. In Section \ref{sec:results} we present three results that play an important role in developing these algorithms. In Section \ref{sec:requirements}, we articulate the main requirements underlying the tests and discuss remedies for cases where these requirements are not satisfied.  Our two empirical applications are contained in Section \ref{sec:applications}.  Finally, we provide some concluding remarks in Section \ref{sec:conclude}.

\section{Review of ARTs in regression models}\label{sec:setup}
We start by reviewing the inference approach proposed by \cite{canay/etal:17} in the context of a linear regression model with clustered data. In order to do so, we index clusters by $j \in J \equiv \{1, \ldots, q\}$ and units in the $j$th cluster by $i \in I_{n,j} \equiv \{1, \ldots, n_j\}$. We also denote by $n=\sum_{j=1}^q n_j$ the total number of observations. The observed data consists of an outcome of interest, $Y_{i,j}$, and a vector of covariates, $Z_{i,j} \in \mathbf R^{d_z}$, that are related through the equation
\begin{equation}\label{eq:main}
    Y_{i,j} = Z_{i,j}'\beta + \epsilon_{i,j}~,
\end{equation}
where $\beta \in \mathbf{R}^{d_{z}}$ are unknown parameters and our requirements on $\epsilon_{i,j}$ are explained below in Section \ref{sec:requirements}. Our goal is to test
\begin{equation}\label{eq:H0-1}
    H_0: c^\prime \beta = \lambda \quad \text{vs.}\quad  H_1: c^\prime \beta \neq \lambda ~,
\end{equation}
for given values of $c\in \mathbf R^{d_z}$ and $\lambda \in \mathbf R$, at level $\alpha \in (0,1)$. An important special case of this framework is a test of the null hypothesis that a particular component of $\beta$ equals a given value, i.e., 
\begin{equation*}
    H_0: \beta_{\ell} = \lambda \quad \text{vs.}\quad  H_1: \beta_{\ell} \neq \lambda ~,
\end{equation*}
for some $\ell \in \{1,\dots,d_{z}\}$, by simply setting $c$ to be a standard unit vector with a one in the $\ell$th component and zeros otherwise. More generally, the approach we describe below extends immediately to the case where the hypothesis of interest involves multiple elements of $\beta$, in which case the test becomes
 \begin{equation}\label{eq:H0-2}
    H_0: R\beta = \Lambda \quad \text{vs.}\quad  H_1: R\beta \neq \Lambda ~,
\end{equation}
for a given $p\times d_z$-dimensional matrix $R$ and $p$-dimensional vector $\Lambda$, at level $\alpha \in (0,1)$.  

ARTs were developed more generally in \cite{canay/etal:17} and admit a variety of different applications that go beyond the linear model considered here. For example, the method accommodates non-linear models, non-linear hypotheses, or even applications that go beyond inference with a small number of clusters (e.g., \cite{canay/kamat:18} develop a variation that applies to inference in the regression discontinuity design). Here, we abstract away from the generality of the method and focus on the steps needed to use ARTs to test the null hypothesis in \eqref{eq:H0-1} in the context of the model in \eqref{eq:main}. 

\subsection{How to implement ARTs}\label{sec:algorithm}
The most straightforward way to test the hypotheses in \eqref{eq:H0-1} via ARTs is by following the steps described in Algorithm \ref{algo:crs1} below. 

\begin{algorithm}[ARTs via within-cluster estimates]\label{algo:crs1}
This implementation of ARTs involves the following steps:
\begin{enumerate}
    \setlength{\itemindent}{-1em}
    \item[] {\bf Step 1}: For each cluster $j\in J$, run an ordinary least squares regression of $Y_{i,j}$ on $Z_{i,j}$ using the $n_j$ observations in cluster $j$. Denote the corresponding estimators of $\beta$ by $$\{\hat \beta_{n,j}:j \in J\}~.$$ 
    \item[] {\bf Step 2}: For each $j\in J$, define the random variables 
    \begin{equation}\label{eq:Snj}
        S_{n,j} \equiv \sqrt{n_j}(c'\hat\beta_{n,j}-\lambda)~,
    \end{equation} 
    and then construct the test statistic 
    \begin{equation}\label{eq:statistic}
        T_n = \Big|\frac{1}{q}\sum_{j=1}^q S_{n,j} \Big|~.
    \end{equation}
    \item[] {\bf Step 3}: Let $\mathbf G = \{1,-1\}^q$, so $g=(g_1,\dots,g_q)\in\mathbf G$ is simply a $q$-dimensional vector with elements $g_j$ being either $1$ or $-1$. For any element $g\in\mathbf G$, define  
    \begin{equation}\label{eq:g-statistic}
        T_n(g) =\Big| \frac{1}{q}\sum_{j=1}^q g_jS_{n,j} \Big|~.
    \end{equation}
    \item[] {\bf Step 4}: Compute the $1-\alpha$ quantile of $\{T_n(g): g\in\mathbf G\}$ as
    \begin{equation}\label{eq:c-hat}
        \hat c_n(1-\alpha) \equiv \inf \left \{ u \in \mathbf R: \frac{1}{|\mathbf G|}\sum_{g\in \mathbf G} I\{T_n(g) \leq u\} \geq 1-\alpha \right \}~.
    \end{equation}
    \item[] {\bf Step 5}: Compute the test as 
    \begin{equation}\label{eq:art-test}
        \phi_n\equiv I\{T_n > \hat c_n(1-\alpha)\}~,
    \end{equation}
    where $T_n$ is as in \eqref{eq:statistic} and $\hat c_n(1-\alpha)$ is as in \eqref{eq:c-hat}. The associated $p$-value is
    \begin{equation}\label{eq:art-pvalue}
        \hat p_n \equiv \frac{1}{|\mathbf G|}\sum_{g\in \mathbf G} I\{T_n(g) \ge T_n\}~,
    \end{equation}
    where $T_n(g)$ is as in \eqref{eq:g-statistic}.
\end{enumerate}
\end{algorithm}

Algorithm \ref{algo:crs1} involves five steps that are easy to implement from a computational standpoint, but some of the steps deserve some clarification. Step 1 involves $q$ within-cluster regressions that lead to $q$ estimates of $\beta$. This essentially demands that the parameter $\beta$ is identified cluster-by-cluster, and may fail to hold if some of the variables in the vector $Z_{i,j}$ are constant within cluster. We discuss possible remedies for this problem in Section \ref{sec:requirements} and illustrate their use in one of the applications in Section \ref{sec:applications}. An important feature of the method is that from Step 2 onwards, the original data is no longer needed as all the calculations only involve the $q$ estimators of the parameter $\beta$ obtained in Step 1. 

Step 2 defines a type of unstudentized $t$-statistic that is appropriate for the null hypothesis in \eqref{eq:H0-1}. We discuss the connection to its studentized version in Section \ref{sec:studentized} below. If the null hypothesis of interest is the one in \eqref{eq:H0-2}, then a Wald-type test statistic could be used instead, i.e., 
\begin{equation}\label{eq:T-Wald}
    T_n^{\rm wald} \equiv q\Big(\frac{1}{q}\sum_{j=1}^q S_{n,j}\Big)'\Sigma_{S}^{-1}\Big(\frac{1}{q}\sum_{j=1}^q S_{n,j} \Big)  ~,
\end{equation}   
where
\begin{equation*}\label{eq:Snj-Sigma}
    S_{n,j} \equiv \sqrt{n}(R\hat\beta_{n,j}-\Lambda)\quad \text{ and }\quad  \Sigma_{S} \equiv \frac{1}{q}\sum_{j=1}^q S_{n,j}S_{n,j}^{\prime}~.
\end{equation*}   

Step 3 does not require one to recompute the estimates of $\beta$. It rather uses the $q$ estimates from Step 1 and applies sign changes to the $q$-dimensional vector $\{S_{n,j}:j \in J\}$. Since the cardinality of $\mathbf G$ is $|\mathbf G|=2^q$, it exceeds $2000$ when $q>10$ and in such cases it may be convenient to use a stochastic approximation. This may be done while still controlling the rejection probability under the null hypothesis \citep[see][Remark 2.2]{canay/etal:17}. Formally, in this case we let 
\begin{equation}\label{eq:G-hat}
    \hat{\mathbf G} \equiv \{g^1,\dots, g^B\}~,
\end{equation} 
where $g^1=\iota\equiv (1,\dots,1)$ is the identity vector and $g^{b}=(g^{b}_1,\dots,g^{b}_q)$, for $b=2,\dots,B$, are i.i.d.\ Rademacher random variables; i.e., each $g^{b}_j$ equals $\pm1$ with equal probability. To retain validity of the test regardless of the value of $B$, we require that $g^1=\iota$. We note, however, that the power of the test may still depend on $B$. For this reason, we implement Algorithm \ref{algo:crs1} with $\hat{\mathbf G}$ replacing $\mathbf G$ everywhere and set $B=1000$ (or any other reasonably large number chosen by the analyst). 

Step 4 requires computing the $1-\alpha$ quantile of $\{T_n(g): g\in\mathbf G\}$, which can be typically  obtained by sorting the values of $\{T_n(g): g\in\mathbf G\}$ and then taking the $\lceil |\mathbf G|(1-\alpha) \rceil^{\rm th}$ highest element in the ordered list. Thus, if we denote the ordered values of $\{T_n(g): g\in\mathbf G\}$ by 
$$ T^{(1)}_n\le T^{(2)}_n \le \cdots \le T^{(B)}_n~,$$
then we may define $\hat c_n(1-\alpha)$ in \eqref{eq:c-hat} as $\hat c_n(1-\alpha)= T^{(\lceil |\mathbf G|(1-\alpha) \rceil)}$. This representation suggests that the test may have trivial power for very low values of $q$. For example, when $\alpha=10\%$, this problem arises if $q\le 4$. For $q=5$ the test already has non-trivial power and is only slightly conservative under the null. Similarly, when $\alpha=5\%$ the test has non-trivial power for any $q\ge 6$. 

Step 5 is straightforward and it provides both the test $\phi_n$ and the $p$-value $\hat p_n$. Each of these correspond to the non-randomized version of ARTs as opposed to their randomized counterparts \citep[see Remark 2.4 in][]{canay/etal:17} since practitioners often prefer tests that do not involve exogenous randomness. In any case, the differences between the randomized and non-randomized versions of the test have been found to be minimal in simulations \citep[see, e.g.,][]{canay/etal:17}.

\subsection{How to compute confidence intervals}\label{sec:ci}
We now discuss how to compute confidence intervals for the parameter $c^\prime \beta$ by developing a novel algorithm that exploits the properties derived in Section \ref{sec:convexity}. As before, a particularly important case is when $c$ selects the $\ell$th component of $\beta$ and then the confidence set is simply a confidence interval for $\beta_{\ell}$. Conceptually we can simply form the confidence set by collecting all values of $c^\prime \beta$ that cannot be rejected by our test at level $\alpha$. That is, for the test $\phi_n$ in \eqref{eq:art-test} we define
\begin{equation}\label{eq:tilde-Cn}
    C_n = \{\lambda \in \mathbf{R}: \phi_n=0 \text{ when testing }H_0: c^\prime \beta = \lambda \}~.
\end{equation}
In an asymptotic framework where $n\to \infty$ while $q$ remains fixed, \cite{canay/etal:17} show that $\phi_n$ is asymptotically level $\alpha$ under $H_0$. It follows from that result that, by construction, $C_n$ covers $c^\prime \beta$ with probability at least equal to $1-\alpha$ asymptotically. In Section \ref{sec:convexity} we show that $C_n$ is indeed a closed interval in $\mathbf{R}$ and so it takes the form 
\begin{equation}\label{eq:Cn}
    C_n = [\lambda_{l}, \lambda_{u}]~,
\end{equation}
where $\lambda_{l}$ is the \emph{smallest} value of $\lambda$ that cannot be rejected by $\phi_n$ and $\lambda_{u}$ is the \emph{largest} value of $\lambda$ that cannot be rejected by $\phi_n$. The analysis in Section \ref{sec:convexity} also reveals that $\lambda_{l}$ and $\lambda_{u}$ admit simple closed-form representations that we exploit to develop Algorithm \ref{algo:crs3} below. 

\begin{algorithm}[ART-based confidence intervals for $c^\prime\beta$]\label{algo:crs3}
For $\{\hat \beta_{n,j}:j\in J\}$ as defined in Step 1 of Algorithm \ref{algo:crs1}, the construction of the confidence interval involves the following steps: 
\begin{enumerate}
    \setlength{\itemindent}{-1em}
    \item[] {\bf Step 1}: For every $g\in \mathbf G$, compute the following objects,
    \begin{equation}\label{eq:a-b}
    a(g) \equiv \frac{1}{q} \sum_{j=1}^q \sqrt{n_j}g_j~, \quad b(g) \equiv \frac{1}{q} \sum_{j=1}^q \sqrt{n_j}g_jc'\hat\beta_{n,j}~, \quad \text{and}\quad \lambda_0 \equiv \frac{b(\iota)}{a(\iota)}~,     
    \end{equation} 
    where $\iota = (1,\dots,1)\in \mathbf G$ is the vector with all ones. 
    \item[] {\bf Step 2}: For every $g\in \mathbf G$  define 
    \begin{equation}\label{eq:lambda-l-g}
    \lambda_l(g)\equiv  \begin{cases}
            \frac{b(\iota)}{a(\iota)}\frac{|a(\iota)|}{|a(\iota)| + |a(g)|}  + \frac{b(g)}{a(g)}\frac{|a(g)|}{|a(\iota)| + |a(g)|} &  \text{ if }  \frac{b(g)}{a(g)}\leq \lambda_0 \text{ and } |a(g)|\ne 0 \\[10pt]
            \frac{b(\iota)}{a(\iota)}\frac{|a(\iota)|}{|a(\iota)| - |a(g)|}  - \frac{b(g)}{a(g)}\frac{|a(g)|}{|a(\iota)| - |a(g)|} & \text{ if }  \frac{b(g)}{a(g)}> \lambda_0 \text{ and } |a(g)|\ne 0 \\[10pt]
            \frac{b(\iota)}{a(\iota)} - \frac{|b(g)|}{a(\iota)}& \text{ if } |a(g)|=0 \\[10pt]
            -\infty  & \text{ if } g=\pm \iota
    \end{cases} ~.
    \end{equation}
    and 
    \begin{equation}\label{eq:lambda-u-g}
    \lambda_u(g)\equiv \begin{cases}
            \frac{b(\iota)}{a(\iota)}\frac{|a(\iota)|}{|a(\iota)| + |a(g)|}  + \frac{b(g)}{a(g)}\frac{|a(g)|}{|a(\iota)| + |a(g)|} &  \mbox{ if } \frac{b(g)}{a(g)} \geq \lambda_0 \text{ and } |a(g)|\ne 0 \\[10pt]
            \frac{b(\iota)}{a(\iota)}\frac{|a(\iota)|}{|a(\iota)| - |a(g)|}  - \frac{b(g)}{a(g)}\frac{|a(g)|}{|a(\iota)| - |a(g)|} &  \mbox{ if } \frac{b(g)}{a(g)} < \lambda_0 \text{ and } |a(g)|\ne 0 \\[10pt]
            \frac{b(\iota)}{a(\iota)} + \frac{|b(g)|}{a(\iota)} &  \mbox{ if } |a(g)|= 0 \\[10pt]
            +\infty & \text{ if } g=\pm \iota
    \end{cases}~.
    \end{equation}
    \item[] {\bf Step 3}: Compute the lower bound $\lambda_l$ in the confidence interval \eqref{eq:Cn} as the $\alpha$ quantile of $\left\{ \lambda_l(g) \, : \, g \in \mathbf{G}  \right\}$, i.e., 
    \begin{equation}\label{eq:lambda-l}
        \lambda_l \equiv \inf \left \{ u \in \mathbf R: \frac{1}{|\mathbf G|}\sum_{g\in \mathbf G} I\{\lambda_l(g) \leq u\} \geq \alpha \right \}~.
    \end{equation}
    Compute the upper bound $\lambda_u$ in the confidence interval \eqref{eq:Cn} as the negative of the $\alpha$ quantile of $\left\{ -\lambda_u(g) \, : \, g \in \mathbf{G}  \right\}$, i.e., 
    \begin{equation}\label{eq:lambda-u}
        \lambda_u \equiv -\inf \left \{ u \in \mathbf R: \frac{1}{|\mathbf G|}\sum_{g\in \mathbf G} I\{-\lambda_u(g) \leq u\} \geq \alpha \right \}~.
    \end{equation}
    Report the confidence interval $C_n$ as in \eqref{eq:Cn}.
\end{enumerate}
\end{algorithm} 

Algorithm \ref{algo:crs3} requires three steps that are straightforward to compute and that exploit the results in Section \ref{sec:convexity}. We refer the reader to that section for the details on why $\lambda_l$ and $\lambda_u$ admit the expressions in \eqref{eq:lambda-l} and \eqref{eq:lambda-u}, respectively. 

\section{Three results on implementation of ARTs}\label{sec:results}
Before we review the main requirement underlying ARTs, we present three properties related to the implementation of ARTs that we believe practitioners should be aware of and that are novel to this paper. The first property establishes a connection between the implementation of ARTs as described in Algorithm \ref{algo:crs1} and an alternative implementation based on weighted scores. The second property establishes the numerical equivalence of ARTs for the null in \eqref{eq:H0-1} when the test statistics in \eqref{eq:statistic} is replaced by its studentized version. The third and final result shows that ARTs confidence set for $c'\beta$ is indeed a closed interval in $\mathbf{R}$ and provides a representation for the upper and lower bounds of the interval that lead to Algorithm \ref{algo:crs3}. 

\subsection{Equivalence with weighted scores}
It turns out that ARTs can be implemented by an algorithm that does not involve estimating the parameter $\beta$ within each cluster. This alternative algorithm involves replacing Steps 1 and 2 in Algorithm \ref{algo:crs1} by the two alternative steps described in Algorithm \ref{algo:crs2} below, while keeping Steps 3 to 5 unaffected. 

\begin{algorithm}[ARTs via within-cluster weighted scores]\label{algo:crs2}
This implementation of ARTs involves the following steps:
\begin{itemize}
    \setlength{\itemindent}{-1em}
    \item[] {\bf Step 1$^\prime$}: Run a full-sample least squares regression of $Y_{i,j}$ on $Z_{i,j}$ subject to the restriction imposed by the null hypothesis, i.e., $c'\beta=\lambda$. Denote by $\hat\epsilon^{\rm r}_{i,j}$ the restricted residuals from this regression and by $\hat \beta^{\rm r}_n$ the restricted LS estimator of $\beta$. 
    \item[] {\bf Step 2$^\prime$}: For each cluster $j\in J$, define 
     \begin{equation}\label{eq:Snj-scores}
         S_{n,j}\equiv c'\hat\Omega_{n,j}^{-1}  \frac{1}{\sqrt{n_j}}\sum_{i\in I_{n,j}} Z_{i,j}\hat \epsilon^{\rm r}_{i,j}~,
     \end{equation}
     where
     \begin{equation}\label{eq:Omega-hat-j}
        \hat\Omega_{n,j}\equiv \frac{1}{n_j\textbf{}}\sum_{i\in I_{n,j}} Z_{i,j}Z_{i,j}'
     \end{equation}
    is a $d_z\times d_z$ matrix that is assumed to be full rank with inverse $\hat\Omega_{n,j}^{-1}$.
    \item[]{\bf Steps 3-5}: Same as in Algorithm \ref{algo:crs1}.
\end{itemize}
\end{algorithm}
Note that Steps 3-5 remain unchanged given the alternative definition of $S_{n,j}$ in Step 2$^\prime$. When it comes to Steps 1 and 2, there are two differences worth discussing. The first difference is that Step 1$^\prime$ requires a single full-sample restricted least squares estimator of $\beta$ as opposed to the $q$ cluster-by-cluster estimators in Step 1 of Algorithm \ref{algo:crs1}. The second difference is that Step 2$^\prime$ is based on within-cluster weighted scores as opposed to the centered within-cluster estimates of $\beta$ in Step 2 of Algorithm \ref{algo:crs1}. Interestingly, these two implementations are numerically equivalent and so implementing ARTs via Algorithm \ref{algo:crs1} or Algorithm \ref{algo:crs2} leads to identical results. To see this formally, it is enough to show that $S_{n,j}$ as defined in \eqref{eq:Snj} and \eqref{eq:Snj-scores} are the same using the following argument. For each $j\in J$, 
\begin{align*}
     S_{n,j} &\equiv c'\hat\Omega_{n,j}^{-1}  \frac{1}{\sqrt{n_j\textbf{}}}\sum_{i\in I_{n,j}} Z_{i,j}\hat \epsilon^{\rm r}_{i,j}\\
         &= c'\hat\Omega_{n,j}^{-1}  \frac{1}{\sqrt{n_j}}\sum_{i\in I_{n,j}} Z_{i,j}(Y_{i,j}- Z_{i,j}'\hat \beta^{\rm r}_n )\\
         &= c'\hat\Omega_{n,j}^{-1}  \frac{1}{\sqrt{n_j}}\sum_{i\in I_{n,j}} Z_{i,j}Y_{i,j}-c'\hat\Omega_{n,j}^{-1}  \frac{1}{\sqrt{n_j}}\sum_{i\in I_{n,j}} Z_{i,j}Z_{i,j}'\hat \beta^{\rm r}_n\\
         & = \sqrt{n_j}(c'\hat \beta_{n,j}-c'\beta) - \sqrt{n_j}(c'\hat \beta^{\rm r}_n-c'\beta)\\
         & = \sqrt{n_j}(c'\hat \beta_{n,j}-\lambda)~,
\end{align*}
where the fourth equality follows by adding and subtracting $\sqrt{n_j}c'\beta$ and the last equality holds because $c'\hat \beta_{n}^{\rm r}=c'\beta=\lambda$ under the null hypothesis in \eqref{eq:H0-1}. It thus follows that $S_{n,j}$ in \eqref{eq:Snj} and in \eqref{eq:Snj-scores} are identical and so ARTs can be alternatively implemented via Algorithm \ref{algo:crs1} or \ref{algo:crs2}.  The following lemma summarizes our discussion above:

\begin{lemma}
Let $\hat\Omega_{n,j}$ in \eqref{eq:Omega-hat-j} be full rank for each $j\in J$. Denote by $C_n$ a confidence interval for $c'\beta$ computed using Algorithm \ref{algo:crs1} and by $C_n^\prime$ a confidence interval for $c'\beta$ computed using Algorithm \ref{algo:crs2}.  Then $C_n = C_n^\prime$.
\end{lemma}

\subsection{Equivalence with studentized version of the \emph{t}-statistic}\label{sec:studentized}
The ART defined in \eqref{eq:art-test} of Algorithm \ref{algo:crs1} is based on the unstudentized test statistic $T_n$ defined in \eqref{eq:statistic}. It may perhaps appear more desirable to instead consider the studentized version of this test statistic as studentization commonly improves performance in a variety of other settings. Here, we prove that this is not the case for ARTs when the null hypothesis is the one in \eqref{eq:H0-1} and that both versions of the test statistic lead to numerically identical results. 

To see this, start by defining the studentized version of the test statistic in \eqref{eq:statistic} as $T_n^{\rm s} \equiv T_n^{\rm s}(\iota)$, where for each $g\in \mathbf G$,
\begin{equation}\label{eq:studentized}
    T_n^{\rm s}(g) \equiv \sqrt{q}\frac{\Big|\frac{1}{q}\sum_{j=1}^q g_jS_{n,j} \Big|}{\hat\sigma_{\rm s}(g)} \quad \text{ and }\quad \hat\sigma_{\rm s}(g) \equiv \sqrt{\frac{1}{q}\sum_{j=1}^q \Big(g_jS_{n,j}-\frac{1}{q}\sum_{j=1}^q g_jS_{n,j}\Big)^2}~. 
\end{equation}
Then note that 
\begin{align*}
    \hat{\sigma}_{\rm s}^2(g) & = \frac{1}{q} \sum_{j=1}^q g_j^2S^2_{n,j} - \left(\frac{1}{q}\sum_{j=1}^q g_jS_{n,j}\right)^2 =V_n - T_n^2(g)~,
\end{align*}
where $V_n \equiv \frac{1}{q} \sum_{j=1}^q g_j^2S^2_{n,j}$ does not depend on $g$ as $g_j^2=1$ for all $j\in J$. It follows that we can write the studentized test statistic as
\begin{align*}
   T_n^{\rm s}(g) = \sqrt{q}\frac{T_n(g)}{\sqrt{V_n - T_n^2(g) }}~.
\end{align*}
Since the function $x \mapsto \frac{x}{\sqrt{1 - x^2}}$ is strictly increasing for $x \in [0,1)$, it follows that $T_n^{\rm s}(g)$ is a strictly monotonic transformation of $T_n(g)$ for each $g\in \mathbf G$. We conclude that $I\{T_n(g)\ge T_n(\iota)\}=I\{T_n^{\rm s}(g)\ge T_n^{\rm s}(\iota)\}$ for all $g\in \mathbf G$ and so the ART based on $T_n(g)$ and $T_n^{\rm s}(g)$ are identical.  This discussion is summarized in the following lemma:

\begin{lemma}
Let $\hat\Omega_{n,j}$ in \eqref{eq:Omega-hat-j} be full rank for each $j\in J$. Denote by $C_n$ a confidence interval for $c'\beta$ computed using Algorithm \ref{algo:crs1} and by $C_n^\prime$ a confidence interval for $c'\beta$ computed using Algorithm \ref{algo:crs1} with $T_n^{\rm s}$ in place of $T_n$ and $T_n^{\rm s}(g)$ in place of $T_n(g)$.  Here, $T_n^{\rm s}(g)$ is given by \eqref{eq:studentized} and $T_n^{\rm s}$ is understood to be $T_n^{\rm s}(\iota)$, where $\iota$ is the identity transformation.  Then, $C_n = C_n^\prime$.
\end{lemma}

\subsection{Convexity of the confidence intervals}\label{sec:convexity}
The ART-based confidence intervals for $c'\beta$ defined in \eqref{eq:tilde-Cn} can be computed by test inversion. From a computational standpoint, however, computing confidence sets by test inversion may be cumbersome and the resulting set may not even be an interval. That is, it may not be closed and convex. In this section we prove that this is not a concern for ART-based confidence intervals for $c'\beta$ and so such confidence intervals could be easily computed by a standard bisection algorithm. In fact, our results go even further. We derive closed form expressions for the lower and upper bounds of the confidence interval that imply that computing ART-based confidence intervals for $c'\beta$ is straightforward from a computational standpoint. In order to derive these results, we slightly change our notation to make explicit the dependence on $\lambda$ of each of the elements entering the test in \eqref{eq:art-test}. To this end, let 
\begin{equation*}
    T_n(g,\lambda) \equiv \Big| \frac{1}{q}\sum_{j=1}^q g_j S_{n,j}(\lambda) \Big| \quad \text{ where } \quad S_{n,j}(\lambda) = \sqrt{n_j}(c'\hat\beta_{n,j}-\lambda)~,
\end{equation*}
and note that $T_n = T_n(\iota, \lambda)$. Using this notation, we can re-write the confidence interval in \eqref{eq:tilde-Cn} as
\begin{equation*}
    C_n = \left\{ \lambda \in \mathbf{R} \, : \frac{1}{|\mathbf{G}|}\sum_{g \in \mathbf{G}} I \left\{T_n(g, \lambda)\ge T_n(\iota,\lambda)   \right\} \ge \alpha  \right\}~,
\end{equation*}
which is simply the values of $\lambda$ for which the $p$-value of the test, as defined in \eqref{eq:art-pvalue}, is not below $\alpha$. In order to show that this confidence set is a closed interval, we claim that the $p$-value 
\begin{equation}\label{eq:p-value-ver2}
    \hat p_n(\lambda) = \frac{1}{|\mathbf{G}|}\sum_{g \in \mathbf{G}} I \left\{T_n(g, \lambda)\ge T_n(\iota,\lambda)   \right\}
\end{equation}
is equal to $1$ for $\lambda_0 \equiv b(\iota)/a(\iota)$, monotonically increasing for any $\lambda<\lambda_0$, and monotonically decreasing for any $\lambda>\lambda_0$. The next lemma formalizes this result. 

\begin{lemma}\label{lem:p-value}
Let $\hat\Omega_{n,j}$ in \eqref{eq:Omega-hat-j} be full rank for each $j\in J$. Let $a(g)$,  $b(g)$, and $\lambda_0$ be defined as in \eqref{eq:a-b}. The p-value in \eqref{eq:p-value-ver2} equals
\begin{equation}\label{eq:pvalue-alternative}
    \hat p_n(\lambda) = \begin{cases}
    \frac{1}{|\mathbf{G}|}\sum_{g \in \mathbf{G}} I \left\{\lambda\ge \lambda_{l}(g)\right\} & \text{ for } \lambda<\lambda_0\\[3pt]
    1 & \text{ for } \lambda=\lambda_0\\[3pt]
   \frac{1}{|\mathbf{G}|}\sum_{g \in \mathbf{G}} I \left\{\lambda\le \lambda_{u}(g)\right\}& \text{ for } \lambda>\lambda_0
    \end{cases}~, 
\end{equation}
where $\{\lambda_u(g):g\in \mathbf G\}$ and $\{\lambda_l(g):g\in \mathbf G\}$ are defined in Algorithm \ref{algo:crs3}.
\end{lemma}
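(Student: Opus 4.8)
The plan is to reduce the whole statement to the elementary observation that, after substituting the definitions in \eqref{eq:a-b}, the statistic is the absolute value of an affine function of $\lambda$. First I would compute
\[
\frac{1}{q}\sum_{j=1}^q g_j S_{n,j}(\lambda) = b(g) - \lambda\,a(g)~,
\]
so that $T_n(g,\lambda) = |b(g) - \lambda\, a(g)|$ and, in particular, $T_n(\iota,\lambda) = |a(\iota)|\,|\lambda - \lambda_0|$ with $\lambda_0 = b(\iota)/a(\iota)$. Two facts then drive everything. The statistic is invariant under $g \mapsto -g$, since $a(-g) = -a(g)$ and $b(-g) = -b(g)$; and, decisively, $|a(g)| = \big|\frac{1}{q}\sum_j \sqrt{n_j}\,g_j\big| \le \frac{1}{q}\sum_j \sqrt{n_j} = a(\iota) = |a(\iota)|$, with equality if and only if $g = \pm\iota$. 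The case $\lambda = \lambda_0$ is then immediate: $T_n(\iota,\lambda_0) = 0 \le T_n(g,\lambda_0)$ for every $g$, so every indicator in \eqref{eq:p-value-ver2} equals one and $\hat p_n(\lambda_0) = 1$, giving the middle line of \eqref{eq:pvalue-alternative}.

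For $\lambda \ne \lambda_0$ I would work one $g$ at a time and describe the solution set $A_g \equiv \{\lambda : T_n(g,\lambda) \ge T_n(\iota,\lambda)\}$. Since both sides are nonnegative, squaring the inequality $|b(g) - \lambda a(g)| \ge |a(\iota)|\,|\lambda - \lambda_0|$ and factoring the resulting quadratic yields
\[
\big(a(g)^2 - a(\iota)^2\big)\,(\lambda - c_{\rm opp})(\lambda - c_{\rm same}) \ge 0~,
\qquad
c_{\rm opp} = \frac{|a(g)|\mu_g + |a(\iota)|\lambda_0}{|a(g)| + |a(\iota)|}, \quad c_{\rm same} = \frac{|a(g)|\mu_g - |a(\iota)|\lambda_0}{|a(g)| - |a(\iota)|}~,
\]
where $\mu_g = b(g)/a(g)$ and the two crossing points are obtained by solving $|b(g) - \lambda a(g)| = |a(\iota)|\,|\lambda - \lambda_0|$ with the opposite-sign and same-sign choices, respectively. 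Here the uniform bound $|a(g)| < |a(\iota)|$ for every $g \ne \pm\iota$ is what makes the argument clean: it forces the leading coefficient $a(g)^2 - a(\iota)^2$ to be negative, so that $A_g$ is the \emph{bounded closed interval} between the two roots rather than the complement of an interval. This is precisely the step that rules out a non-convex confidence set.

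Next I would locate $c_{\rm opp}$ and $c_{\rm same}$ relative to $\lambda_0$ to identify the roots with the bounds in \eqref{eq:lambda-l-g}–\eqref{eq:lambda-u-g}. A short computation shows that the ``opposite-sign'' root $c_{\rm opp}$ always matches the \emph{first} branch of both $\lambda_l(g)$ and $\lambda_u(g)$, while the ``same-sign'' root $c_{\rm same}$ matches the \emph{second} branch, with the selection governed exactly by whether $\mu_g \lessgtr \lambda_0$; in every case one finds $\lambda_l(g) \le \lambda_0 \le \lambda_u(g)$ and $A_g = [\lambda_l(g), \lambda_u(g)]$. The two degenerate branches are checked directly: when $|a(g)| = 0$ the comparison is of a constant $|b(g)|$ against the V-shaped $|a(\iota)|\,|\lambda - \lambda_0|$, giving $A_g = [\lambda_0 - |b(g)|/a(\iota),\, \lambda_0 + |b(g)|/a(\iota)]$, which is the $|a(g)|=0$ branch; and when $g = \pm\iota$ one has $T_n(g,\lambda) = T_n(\iota,\lambda)$, so $A_g = \mathbf R$, consistent with $\lambda_l(\pm\iota) = -\infty$ and $\lambda_u(\pm\iota) = +\infty$.

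Finally I would intersect $A_g = [\lambda_l(g), \lambda_u(g)]$ with each half-line. For $\lambda < \lambda_0 \le \lambda_u(g)$ the upper constraint is automatic, so $I\{\lambda \in A_g\} = I\{\lambda \ge \lambda_l(g)\}$ \emph{termwise}; symmetrically, for $\lambda > \lambda_0 \ge \lambda_l(g)$ one gets $I\{\lambda \in A_g\} = I\{\lambda \le \lambda_u(g)\}$. Summing over $g \in \mathbf G$ produces the top and bottom lines of \eqref{eq:pvalue-alternative}. I expect the main obstacle to be the bookkeeping in the previous paragraph — pairing each crossing point with the correct branch of the piecewise definitions across the sign patterns of $a(g)$ and the position of $\mu_g$ relative to $\lambda_0$, and disposing cleanly of the degenerate cases. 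The conceptual key that tames this, however, is the single inequality $|a(g)| \le |a(\iota)|$: it guarantees that each $A_g$ is a genuine interval containing $\lambda_0$, so that the reduction is termwise and no delicate cancellation across different $g$ is ever needed.
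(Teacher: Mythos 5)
Your proposal is correct and follows essentially the same route as the paper's proof: both reduce to the identity $T_n(g,\lambda)=|b(g)-\lambda a(g)|$, both hinge on the inequality $|a(g)|\le a(\iota)$ with equality only for $g=\pm\iota$, and both split into the same three cases (generic $a(g)$, $a(g)=0$, $g=\pm\iota$) before concluding termwise. The only difference is presentational — you solve for the crossing points by squaring and factoring the quadratic and characterize the full set $A_g=[\lambda_l(g),\lambda_u(g)]$ at once, whereas the paper argues geometrically that the two V-shaped functions cross exactly once on each half-line and leaves the root computation as ``simple algebra.''
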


\begin{proof}
It is useful to re-write $T_n(g,\lambda)$ in terms of $a(g)$ and $b(g)$. To this end, note that 
\begin{align}
    T_n(g,\lambda) &\equiv \Big| \frac{1}{q}\sum_{j=1}^q g_j S_{n,j}(\lambda) \Big|\notag = \Big| \frac{1}{q}\sum_{j=1}^q g_j \sqrt{n_j}c'\hat\beta_{n,j}-\lambda\frac{1}{q}\sum_{j=1}^q g_j \sqrt{n_j} \Big|\notag \\
    &= |b(g) - \lambda a(g)|~. \label{eq:statistic-ab}
\end{align}
Given $g\in \mathbf G$ and $a(g)\ne 0$, $T_n(g,\lambda)$ is a ``V-shaped'' function of $\lambda$ taking the value $0$ at $\frac{b(g)}{a(g)}$ and with slope $-|a(g)|$ for all $\lambda<\frac{b(g)}{a(g)}$ and slope $|a(g)|\leq a(\iota) $ for all $\lambda>\frac{b(g)}{a(g)}$. Figure \ref{figure:intersections} illustrates this for three values of $g$.

First, note that $I \left\{T_n(g, \lambda_0)\ge T_n(\iota,\lambda_0)   \right\}=I \{T_n(g, \lambda_0)\ge 0 \}=1$ for all $g\in\mathbf G$ and so it follows immediately that $\hat p_n(\lambda_0)=1$.

Second, restrict attention to the set $\Lambda^{+}\equiv \{\lambda \in \mathbf R: \lambda> \lambda_0\}$ where $T_n(\iota,\lambda)$ is linearly increasing. In order to prove that $\hat p_n(\lambda)$ takes the form in \eqref{eq:pvalue-alternative} we prove that $I \left\{T_n(g, \lambda)\ge T_n(\iota,\lambda)   \right\}=I \left\{\lambda\le \lambda_{u}(g)\right\}$ for each $g\in\mathbf G$ by dividing the argument into three cases.

\underline{Case 1}: Consider $g\in \mathbf G$ such that $a(g)\ne 0$ and $|a(g)|\ne a(\iota)$. Since $|a(g)|< a(\iota)$, it follows that $T_n(g, \lambda)$ and $T_n(\iota, \lambda)$ intersect only once on $\Lambda^{+}$ and this holds regardless of whether $\frac{b(g)}{a(g)}< \lambda_0$ or $\frac{b(g)}{a(g)}\ge \lambda_0$ (see Figure \ref{figure:intersections} for a graphical illustration of each of these cases). Denote the intersection point by $\lambda_{u}(g)$ and note that $T_n(g,\lambda)\ge T_n(\iota,\lambda)$ for all $\lambda_0< \lambda\le \lambda_{u}(g)$ and $T_n(g,\lambda)<T_n(\iota,\lambda)$ for all $\lambda>\lambda_u(g)$. Conclude that on $\Lambda^{+}$,
\begin{equation}\label{eq:equiv-lu}
    I\{T_n(g,\lambda)\ge T_n(\iota,\lambda)\} = I\{\lambda\le \lambda_{u}(g)\}~.
\end{equation}
Simple algebra shows that the intersection point $\lambda_u(g)$ takes the form in \eqref{eq:lambda-u-g}.

\underline{Case 2}: Consider $g\in \mathbf G$ such that $a(g) = 0$. Note that $b(\iota) - \lambda a(\iota)<0$ for $\lambda\in\Lambda^{+}$. It thus follows that for $\lambda\in \Lambda^{+}$,
\begin{equation*}
    I\{T_n(g,\lambda)\ge T_n(\iota, \lambda) \} = I\{|b(g)|\ge |b(\iota) - \lambda a(\iota)| \} = I\left\{ \lambda \le \frac{b(\iota)}{a(\iota)} + \frac{|b(g)|}{a(\iota)} \right\}~,
\end{equation*}
and so \eqref{eq:equiv-lu} holds in this case with $\lambda_{u}(g) = \frac{b(\iota)}{a(\iota)} + \frac{|b(g)|}{a(\iota)} $, as defined in \eqref{eq:lambda-u-g}. 

% and so $I\{|b(g)|\ge |b(\iota) - \lambda a(\iota)| \}=I\{|b(g)|\ge \lambda a(\iota)-b(\iota) \}$

\underline{Case 3}: Consider $g\in \mathbf G$ such that $|a(g)|=a(\iota)$ and so $g=\pm \iota$. If $g=\iota$, $I\{T_n(g,\lambda)\ge T_n(\iota, \lambda) \}=1$ for all $\lambda\in \mathbf R$. We conclude that \eqref{eq:equiv-lu} holds with $\lambda_u(g)=\infty$. If $g=-\iota$, then we have that $a(-\iota)=-a(\iota)$ and $b(-\iota)=-b(\iota)$ so that $\frac{b(-\iota)}{a(-\iota)} =\lambda_0$ and again $I\left\{ T_n(-\iota, \lambda) \ge T_n(\iota, \lambda) \right\} = 1$ for all $\lambda\in \mathbf R$. We conclude that \eqref{eq:equiv-lu} holds with $\lambda_u(g)=\infty$, as defined in \eqref{eq:lambda-u-g}. This completes the proof of \eqref{eq:pvalue-alternative} for the case $\lambda \in \Lambda^{+}$. 

Finally, the construction for $\lambda\in \Lambda^{-}\equiv \{\lambda \in \mathbf R: \lambda < \lambda_0\}$ parallels the one for $\lambda \in \Lambda^{+}$ so we omit the arguments here. Putting all the cases together, \eqref{eq:pvalue-alternative} follows and this completes the proof.  
\end{proof}

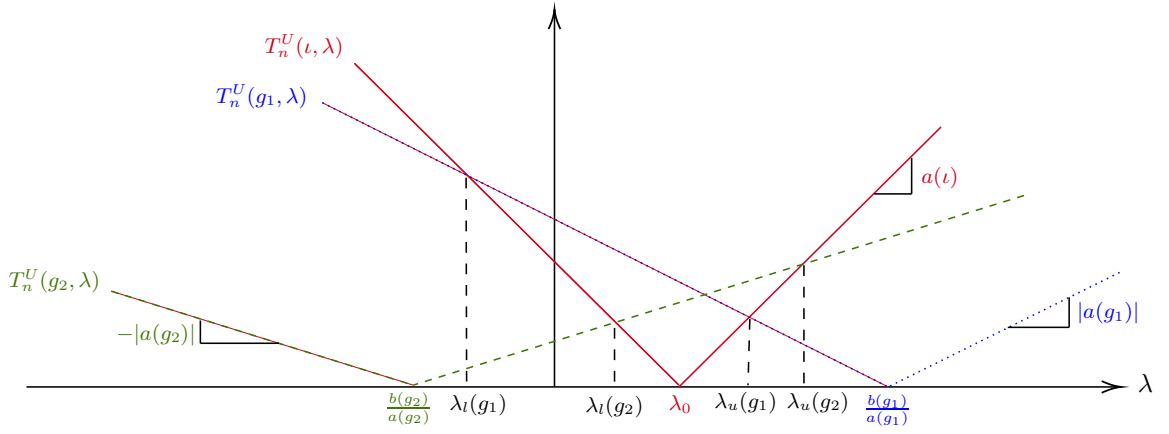
\begin{figure}[t!]
\centering
\scalebox{0.9}{% !TEX root = main.tex
\tikzset{every picture/.style={line width=0.65pt}} %set default line width to 0.75pt        

\begin{tikzpicture}[x=0.75pt,y=0.75pt,yscale=-1,xscale=1]
%uncomment if require: \path (0,277); %set diagram left start at 0, and has height of 277

%Straight Lines [id:da8956255286812136] 
\draw    (22.29,219) -- (627.29,219) ;
\draw [shift={(629.29,219)}, rotate = 180] [color={rgb, 255:red, 0; green, 0; blue, 0 }  ][line width=0.75]    (10.93,-3.29) .. controls (6.95,-1.4) and (3.31,-0.3) .. (0,0) .. controls (3.31,0.3) and (6.95,1.4) .. (10.93,3.29)   ;
%Straight Lines [id:da3166235109463045] 
\draw    (315,219) -- (315,10.02) ;
\draw [shift={(315,8.02)}, rotate = 450] [color={rgb, 255:red, 0; green, 0; blue, 0 }  ][line width=0.75]    (10.93,-3.29) .. controls (6.95,-1.4) and (3.31,-0.3) .. (0,0) .. controls (3.31,0.3) and (6.95,1.4) .. (10.93,3.29)   ;
%Straight Lines [id:da5259550425453001] 
\draw [color={rgb, 255:red, 208; green, 2; blue, 27 }  ,draw opacity=1][fill={rgb, 255:red, 208; green, 2; blue, 27 }  ,fill opacity=1 ]   (204.11,38.11) -- (384,218) ;
%Straight Lines [id:da4417108262799987] 
\draw [color={rgb, 255:red, 208; green, 2; blue, 27 }  ,draw opacity=1]   (529.29,73.71) -- (384,219) ;
%Straight Lines [id:da02346261944100303] 
\draw    (513,91) -- (513,111.11) ;
%Straight Lines [id:da423015494369978] 
\draw    (492.29,111.11) -- (513,111.11) ;
%Straight Lines [id:da0898096709872569] 
\draw [color={rgb, 255:red, 0; green, 0; blue, 255 }  ,draw opacity=1 ][fill={rgb, 255:red, 208; green, 2; blue, 27 }  ,fill opacity=1, dotted ]   (186.29,60.11) -- (500,219) ;
%Straight Lines [id:da19954670116519102] 
\draw [color={rgb, 255:red, 0; green, 0; blue, 255 }  ,draw opacity=1, dotted ]   (628.29,155.02) -- (500,219) ;
%Straight Lines [id:da3692598917686507] 
\draw    (600.32,168.91) -- (600.32,185.7) ;
%Straight Lines [id:da015528372683909497] 
\draw    (566.29,185.7) -- (600.32,185.7) ;
%Straight Lines [id:da8298427728387974] 
\draw [color={rgb, 255:red, 65; green, 117; blue, 5 }  ,draw opacity=1 ][fill={rgb, 255:red, 208; green, 2; blue, 27 }  ,fill opacity=1, dashed]   (69.29,165.47) -- (236.7,218) ;
%Straight Lines [id:da45115379305559244] 
\draw [color={rgb, 255:red, 65; green, 117; blue, 5 }  ,draw opacity=1,dashed ]   (575.29,112.02) -- (236.7,218) ;
%Straight Lines [id:da554876704239107] 
\draw    (118.69,181.85) -- (118.69,194.6) ;
%Straight Lines [id:da16103790155295705] 
\draw    (118.69,194.6) -- (162.29,194.6) ;
%Straight Lines [id:da5751373164412947] 
\draw  [dash pattern={on 4.5pt off 4.5pt}]  (266,102.02) -- (266.29,219.02) ;
%Straight Lines [id:da19566953384608676] 
\draw  [dash pattern={on 4.5pt off 4.5pt}]  (348.29,186.02) -- (348.29,221.11) ;
%Straight Lines [id:da5033624855883683] 
\draw  [dash pattern={on 4.5pt off 4.5pt}]  (453.29,151.02) -- (453.29,219.02) ;
%Straight Lines [id:da6506772934814029] 
\draw  [dash pattern={on 4.5pt off 4.5pt}]  (423.29,181.02) -- (422.29,218.02)  ;

% Text Node
\draw (637,210.4) node [anchor=north west][inner sep=0.75pt]    {$\lambda $};
% Text Node
\draw (377,222.4) node [anchor=north west][inner sep=0.75pt]  [font=\footnotesize,color={rgb, 255:red, 208; green, 2; blue, 27 }  ,opacity=1 ]  {$\lambda _{0} \ $};
% Text Node
\draw (517,94.4) node [anchor=north west][inner sep=0.75pt]  [font=\footnotesize,color={rgb, 255:red, 208; green, 2; blue, 27 }  ,opacity=1 ]  {$a( \iota )$};
% Text Node
\draw (485,221.4) node [anchor=north west][inner sep=0.75pt]  [font=\footnotesize,color={rgb, 255:red, 0; green, 0; blue, 255 }  ,opacity=1 ]  {$\frac{b( g_{1})}{a( g_{1})}$};
% Text Node
\draw (603.32,170.31) node [anchor=north west][inner sep=0.75pt]  [font=\footnotesize,color={rgb, 255:red, 0; green, 0; blue, 255 }  ,opacity=1 ]  {$|a( g_{1}) |$};
% Text Node
\draw (219.7,220.4) node [anchor=north west][inner sep=0.75pt]  [font=\footnotesize,color={rgb, 255:red, 65; green, 117; blue, 5 }  ,opacity=1 ]  {$\frac{b( g_{2})}{a( g_{2})}$};
% Text Node
\draw (70.68,182.14) node [anchor=north west][inner sep=0.75pt]  [font=\footnotesize,color={rgb, 255:red, 65; green, 117; blue, 5 }  ,opacity=1 ]  {$-|a( g_{2}) |$};
% Text Node
\draw (402.29,220.42) node [anchor=north west][inner sep=0.75pt]  [font=\footnotesize]  {$\lambda_u( g_{1})$};
% Text Node
\draw (442.29,220.42) node [anchor=north west][inner sep=0.75pt]  [font=\footnotesize]  {$\lambda_u( g_{2})$};
% Text Node
\draw (126,48.4) node [anchor=north west][inner sep=0.75pt]  [font=\footnotesize,color={rgb, 255:red, 0; green, 0; blue, 255 }  ,opacity=1 ]  {$T_{n}^{U}( g_{1} ,\lambda )$};
% Text Node
\draw (11,151.4) node [anchor=north west][inner sep=0.75pt]  [font=\footnotesize,color={rgb, 255:red, 65; green, 117; blue, 5 }  ,opacity=1 ]  {$T_{n}^{U}( g_{2} ,\lambda )$};
% Text Node
\draw (153,20.4) node [anchor=north west][inner sep=0.75pt]  [font=\footnotesize,color={rgb, 255:red, 208; green, 2; blue, 27 }  ,opacity=1 ]  {$T_{n}^{U}( \iota ,\lambda )$};
% Text Node
\draw (330.79,222.4) node [anchor=north west][inner sep=0.75pt]  [font=\footnotesize]  {$\lambda_l( g_{2})$};
% Text Node
\draw (255.29,221.37) node [anchor=north west][inner sep=0.75pt]  [font=\footnotesize]  {$\lambda_l( g_{1})$};

\end{tikzpicture}}
\caption{$T_n(g, \lambda)$ as functions of $\lambda$ for $g\in \{\iota,g_1,g_2\}$.}
\label{figure:intersections}
\end{figure}

Figure \ref{figure:pvalue} illustrates the $p$-value in \eqref{eq:pvalue-alternative} as a function of $\lambda$ for the groups in Figure \ref{figure:intersections}. Since $ \hat p_n(\lambda)$ is right continuous and increasing for $\lambda<\lambda_0$, we can define $\lambda_l$ as the smallest value of $\lambda$ for which $ \hat p_n(\lambda)\ge \alpha$. Such value exists and is unique. Similar, since $ \hat p_n(\lambda)$ is left continuous and decreasing for $\lambda>\lambda_0$, we can define $\lambda_u$ as the largest value of $\lambda$ for which $ \hat p_n(\lambda)\ge \alpha$. Such value exists and is again unique. This argument leads to the representation of $C_n$ in \eqref{eq:Cn}, showing that ART-based confidence intervals for $c'\beta$ are indeed intervals in $\mathbf R$. Furthermore, note that \eqref{eq:pvalue-alternative} implies that the smallest value of $\lambda$ for which $ \hat p_n(\lambda)\ge \alpha$ can be defined as
\begin{equation*}
    \inf \left \{ \lambda \in \mathbf R:  \frac{1}{|\mathbf{G}|}\sum_{g \in \mathbf{G}} I \left\{\lambda\ge \lambda_{l}(g)\right\}\geq \alpha \right \}~,
\end{equation*}
which is just the definition of the $\alpha$ quantile of $\lambda_l(g)$, as defined in Algorithm \ref{algo:crs3}. A similar result holds for $\lambda_u$ and so $C_n$ can be computed in closed form by Algorithm \ref{algo:crs3}.

\begin{figure}[h!]
\centering
\tikzset{every picture/.style={line width=0.75pt}} %set default line width to 0.75pt        

\begin{tikzpicture}[x=0.75pt,y=0.75pt,yscale=-1,xscale=1]
%uncomment if require: \path (0,277); %set diagram left start at 0, and has height of 277

%Straight Lines [id:da7433743617655135] 
\draw    (174.29,219) -- (542.29,219) ;
\draw [shift={(544.29,219)}, rotate = 180] [color={rgb, 255:red, 0; green, 0; blue, 0 }  ][line width=0.75]    (10.93,-3.29) .. controls (6.95,-1.4) and (3.31,-0.3) .. (0,0) .. controls (3.31,0.3) and (6.95,1.4) .. (10.93,3.29)   ;
%Straight Lines [id:da557541329953307] 
\draw    (315,219) -- (315,10.02) ;
\draw [shift={(315,8.02)}, rotate = 450] [color={rgb, 255:red, 0; green, 0; blue, 0 }  ][line width=0.75]    (10.93,-3.29) .. controls (6.95,-1.4) and (3.31,-0.3) .. (0,0) .. controls (3.31,0.3) and (6.95,1.4) .. (10.93,3.29)   ;
%Straight Lines [id:da48498967889213174] 
\draw  [dash pattern={on 4.5pt off 4.5pt}]  (266,87.02) -- (266.29,219.02) ;
%Straight Lines [id:da6162020485723776] 
\draw  [dash pattern={on 4.5pt off 4.5pt}]  (348.29,43.24) -- (348.29,221.11) ;
%Straight Lines [id:da5458959835387167] 
\draw  [dash pattern={on 4.5pt off 4.5pt}]  (453.29,90.24) -- (453.29,219.02) ;
%Straight Lines [id:da9005527424345188] 
\draw  [dash pattern={on 4.5pt off 4.5pt}]  (422.29,43.24) -- (422.29,218.02) ;
%Straight Lines [id:da7827666575724013] 
\draw    (348.29,40.24) -- (422.29,40.24) ;
\draw [shift={(422.29,40.24)}, rotate = 0] [color={rgb, 255:red, 0; green, 0; blue, 0 }  ][fill={rgb, 255:red, 0; green, 0; blue, 0 }  ][line width=0.75]      (0, 0) circle [x radius= 3.35, y radius= 3.35]   ;
\draw [shift={(348.29,40.24)}, rotate = 0] [color={rgb, 255:red, 0; green, 0; blue, 0 }  ][fill={rgb, 255:red, 0; green, 0; blue, 0 }  ][line width=0.75]      (0, 0) circle [x radius= 3.35, y radius= 3.35]   ;
%Straight Lines [id:da17406511418521053] 
\draw    (266,91.02) -- (345.94,91.02) ;
\draw [shift={(348.29,91.02)}, rotate = 0] [color={rgb, 255:red, 0; green, 0; blue, 0 }  ][line width=0.75]      (0, 0) circle [x radius= 3.35, y radius= 3.35]   ;
\draw [shift={(266,91.02)}, rotate = 0] [color={rgb, 255:red, 0; green, 0; blue, 0 }  ][fill={rgb, 255:red, 0; green, 0; blue, 0 }  ][line width=0.75]      (0, 0) circle [x radius= 3.35, y radius= 3.35]   ;
%Straight Lines [id:da023465977481940126] 
\draw    (425.64,90.24) -- (453.29,90.24) ;
\draw [shift={(453.29,90.24)}, rotate = 0] [color={rgb, 255:red, 0; green, 0; blue, 0 }  ][fill={rgb, 255:red, 0; green, 0; blue, 0 }  ][line width=0.75]      (0, 0) circle [x radius= 3.35, y radius= 3.35]   ;
\draw [shift={(423.29,90.24)}, rotate = 0] [color={rgb, 255:red, 0; green, 0; blue, 0 }  ][line width=0.75]      (0, 0) circle [x radius= 3.35, y radius= 3.35]   ;
%Straight Lines [id:da7983702304477254] 
\draw    (455.64,140.13) -- (518.29,140.13) ;
\draw [shift={(453.29,140.13)}, rotate = 0] [color={rgb, 255:red, 0; green, 0; blue, 0 }  ][line width=0.75]      (0, 0) circle [x radius= 3.35, y radius= 3.35]   ;
%Straight Lines [id:da16645859227948834] 
\draw    (201.29,140.02) -- (263.34,140.02) ;
\draw [shift={(265.69,140.02)}, rotate = 0] [color={rgb, 255:red, 0; green, 0; blue, 0 }  ][line width=0.75]      (0, 0) circle [x radius= 3.35, y radius= 3.35]   ;
%Straight Lines [id:da9925236817989662] 
\draw  [dash pattern={on 4.5pt off 4.5pt}]  (385.29,40.24) -- (385.29,218.11) ;
%Straight Lines [id:da6590606844366218] 
\draw    (310.61,39.87) -- (320.04,39.87) ;

% Text Node
\draw (554,212.4) node [anchor=north west][inner sep=0.75pt]    {$\lambda $};
% Text Node
\draw (377,222.4) node [anchor=north west][inner sep=0.75pt]  [font=\footnotesize,color={rgb, 255:red, 0; green, 0; blue, 0 }  ,opacity=1 ]  {$\lambda _{0} \ $};
% Text Node
\draw (402.29,220.42) node [anchor=north west][inner sep=0.75pt]  [font=\footnotesize]  {$\lambda_u( g_{1})$};
% Text Node
\draw (442.29,220.42) node [anchor=north west][inner sep=0.75pt]  [font=\footnotesize]  {$\lambda_u( g_{2})$};
% Text Node
\draw (330.79,222.4) node [anchor=north west][inner sep=0.75pt]  [font=\footnotesize]  {$\lambda_l( g_{2})$};
% Text Node
\draw (255.29,221.37) node [anchor=north west][inner sep=0.75pt]  [font=\footnotesize]  {$\lambda_l( g_{1})$};
% Text Node
\draw (273.35,3.91) node [anchor=north west][inner sep=0.75pt]  [font=\footnotesize]  {$\hat{p}_{n}( \lambda )$};
% Text Node
\draw (301.52,34.01) node [anchor=north west][inner sep=0.75pt]  [font=\footnotesize]  {$1$};

\end{tikzpicture}
\caption{$\hat p_n(\lambda)$ as a function of $\lambda$.}
\label{figure:pvalue}
\end{figure}
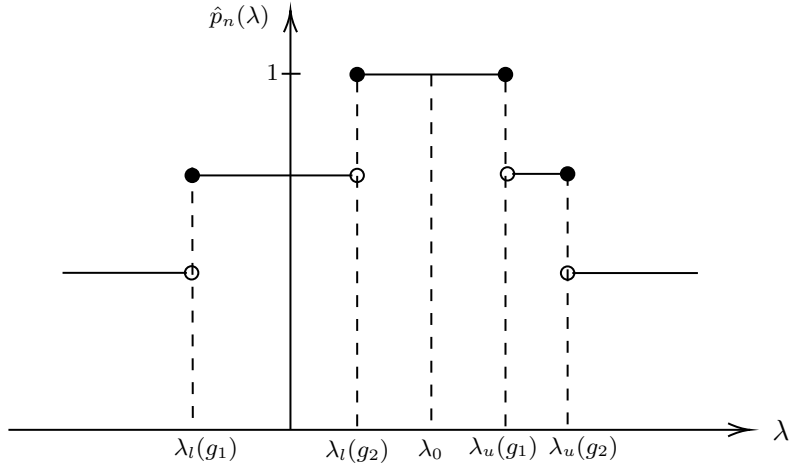

% \begin{remark}
% [Here a remark stating the importance of this result and clarifying that a similar result does not apply to the vector case]
% \end{remark}

\section{What we need for ARTs to work}\label{sec:requirements}
The main requirement underlying ARTs is Assumption 3.1 in \cite{canay/etal:17}. This assumption guarantees that the test delivers rejection probabilities under the null hypothesis that are close to the nominal level $\alpha$ in an asymptotic framework where $n\to \infty$ and $q$ remains fixed. In the context of the linear model in \eqref{eq:main}, this translates into the following two conditions summarized in Assumption \ref{ass:CRS3.1} below. 

\begin{assumption}\label{ass:CRS3.1}
Let $\{\hat\beta_{n,j}:j\in J\}$ be the cluster-by-cluster estimators of $\beta$ defined in Algorithm \ref{algo:crs1}. Assume that:
\begin{enumerate}[(a)]
    \item $\{\hat\beta_{n,j}:j\in J\}$ jointly converge in distribution at some (possibly unknown) rate; i.e., 
\begin{equation}\label{eq:assump-1}
    \left(
    \begin{array}{c}
    a_{n,1}(\hat\beta_{n,1}-\beta)\\
    \vdots\\
    a_{n,q}(\hat\beta_{n,q}-\beta)\\
    \end{array} \right)\overset{d}{\to} 
    \left(
    \begin{array}{c}
    S_1\\
    \vdots\\
    S_q\\
    \end{array}\right)
\end{equation}
for a sequences $a_{n,j}\to \infty$ and random variables $(S_1,\dots,S_q)'$.
\item The limiting random variables $(S_1,\dots,S_q)'$ are invariant to sign changes, i.e., 
\begin{equation}\label{eq:assump-2}
    (g_1S_1,\dots,g_qS_q) \overset{d}{=} (S_1,\dots,S_q)~,
\end{equation}
for any $g$ in $\mathbf G$, where $\mathbf G$ is defined in Step 4 of Algorithm \ref{algo:crs1}. 
\end{enumerate}

\end{assumption}

Condition \eqref{eq:assump-1} holds, for example, when $Z_{i,j}$ and $\epsilon_{i,j}$ are uncorrelated and the analyst assumes some form of weak dependence within clusters that permits the application of an appropriate central limit theorem. In such a case, \eqref{eq:assump-1} typically holds with $a_{n,j}=\sqrt{n_j}$ and each $S_j$ being a mean-zero normal random variable. In fact, under the commonly used assumption of independent clusters, it also follows that $S_j \indep S_{j'}$ for any $j\ne j'$. In this case the normally distributed random variables may not be identically distributed but are indeed independent. Condition \eqref{eq:assump-2}, in turn, requires each $S_j$ to be symmetrically distributed around zero and independent of each other. This is immediately satisfied when each $S_j$ is a mean-zero normal random variable and clusters are independent. Importantly, these assumptions allow for the normally distributed random variables to have different variances across clusters; a type of heterogeneity not allowed by the cluster wild bootstrap approach popularized by \cite{cameron2008bootstrap} and later studied formally by \cite{canay/santos/shaikh:20}.  

\begin{remark}\label{rem:on-normality}
\rm The asymptotic normality in \eqref{eq:assump-1} arises frequently in applications, but is not necessary for the validity of ARTs. All that is required is that the estimators $\{a_{n,j}(\hat\beta_{n,j}-\beta):j\in J\}$ have a limiting distribution that is the product of $q$ distributions that are symmetric about zero. This may even hold in cases where the estimators have infinite variances or are inconsistent. See \citet[][Remark 4.5]{canay/etal:17} for additional discussion on this point. \qed
\end{remark}

\begin{remark}\label{rem:large-clusters}
\rm It is worthwhile to contrast the requirements of Assumption \ref{ass:CRS3.1} with those of ``classical'' methods, such as those described in \cite{liang1986longitudinal}.  These latter methods permit arbitrary dependence within each cluster, but require the size of the clusters to be small and the number of clusters to be large.  As described above, Assumption \ref{ass:CRS3.1}(a), on the other hand, permits the number of clusters to be small, but requires the size of the clusters to be large and weak dependence within each cluster.  We emphasize, however, that these restrictions are commonly employed in establishing the validity of other methods in settings with a small number of clusters, including, for example, the $t$-test approach by \cite{ibragimov2010t} and the wild bootstrap \cite{canay/santos/shaikh:20}. \qed
\end{remark}

% Assumption \ref{ass:CRS3.1}(a) rules out arbitrary strong dependence and small cluster sizes. While these features are typically allowed for in standard cluster settings where the number of clusters is large, they become common in settings with a small number of clusters. For example, the $t$-test approach by \cite{ibragimov2010t} and the Wild bootstrap \citep[see][]{canay/santos/shaikh:20} also rule out small cluster sizes. In both of our empirical applications we report the size of clusters and, while in the application of Section \ref{sec:MQY} sizes are relatively small ranging from 13 to 30 observations, in the application of Section \ref{sec:MR} sizes are relatively big ranging from 160 to 330 observations.

% 1. Assumption 4.1 in general rules out arbitrary strong dependence and small cluster sizes that are allowed in standard cluster sampling settings. Readers need to be aware of these drawbacks, and so a remark on these limitations should be clearly put up front.

% 2. The empirical application presented in Section 5.1 involves small clusters. A discussion of the assumptions in this context will be useful.

\begin{remark}\label{rem:IV}
    \rm We focus our exposition on the case where $Z_{i,j}$ is exogenous but we emphasize that the conditions in \eqref{eq:assump-1} and \eqref{eq:assump-2} typically hold in instrumental variable (IV) models. Accommodating IV to ARTs then only requires modifying Step 1 in Algorithm \ref{algo:crs1} so that the least squares regression is replaced with the appropriate IV regression. Steps 2-6 remain unaffected. \qed
\end{remark}

An implicit requirement behind ARTs that deserves further comments lies in Step 1 of Algorithm \ref{algo:crs1}, which requires that the analyst runs cluster-by-cluster regressions. This step implicitly assumes that the parameter $\beta$ is identified within each cluster. In practice, this means that the matrix $\hat \Omega_{n,j}$ in \eqref{eq:Omega-hat-j} must be invertible for each $j\in J$ and hence the same requirement applies to Algorithm \ref{algo:crs2}. This restriction may be substantially important in some applications and so here we discuss common ways in which the problem may manifest and two alternative remedies. 

One case in which running least squares cluster-by-cluster is not feasible is when the coefficient of interest is associated with a variable that only varies across clusters. For example, consider the model in \eqref{eq:main} and partition $Z_{i,j}$ into a constant term, a scalar variable that only varies across clusters, $Z^{(1)}_j$, and another variable that varies across and within clusters, $Z^{(2)}_{i,j}$. That is, 
\begin{equation}\label{eq:reg-issue-1}
    Y_{i,j} = \beta_0 + Z^{(1)}_{j}\beta_1 + Z^{(2)}_{i,j}\beta_2+ \epsilon_{i,j}~,
\end{equation}
where the analysts' interest lies in the coefficient $\beta_1$, i.e., $c'\beta = \beta_1$. Clearly, the regression in Step 1 of Algorithm \ref{algo:crs1} would not separately identify $\beta_0$ and $\beta_1$ as $Z^{(1)}_{j}$ is perfectly colinear with the constant term. The matrix $\hat \Omega_{n,j}$ in \eqref{eq:Omega-hat-j} is simply singular. This situation arises, for example, in the empirical application considered by \cite{canay/etal:17supp} where $j\in J$ indexes schools and the variable of interest is a treatment indicator at the school level. A natural remedy in a situation like this is clustering more coarsely (e.g., by combining clusters) to obtain variation within the re-defined clusters. This is possible for ARTs since the validity of the method does not rely on having a large number of clusters and thus it can afford to work with coarser clustering. In fact, in certain settings combining clusters may be quite natural. For example, \cite{canay/etal:17supp} re-defined clusters as ``pairs'' of schools (as opposed to just schools) given that the treatment assignment mechanism of the experiment was a matched pairs design and so the pairs used at the randomization stage represented natural groupings. In other settings where it is less clear how to group clusters, any grouping that satisfies the requisite identification condition leads to a valid test, but it may be further desirable to combine such tests to limit concerns about ``data snooping'' across groupings.  To this end, results in \cite{diciccio2020exact} on combining tests may be relevant.

\begin{remark}\label{rem:what-not-to-do}
\rm A quick inspection of \eqref{eq:reg-issue-1} may lead the analyst to believe there is a workaround that does not involve combining clusters if one instead uses some estimator of $\beta_0$ from a full sample regression. For example, the full sample least squares estimator $\hat\beta_{n,0}$ from the regression in \eqref{eq:main}. Then, assuming for simplicity that $Z_j^{(1)}\ne 0$ for all $j\in J$, one may consider modifying Step 1 in Algorithm \ref{algo:crs1} by running a regression of $Y_{i,j}$ on an intercept and $Z^{(2)}_{i,j}$ (not including $Z_j^{(1)}$) and then redefining $\{\hat \beta_{n,j}:j \in J\}$ as the difference between the within cluster intercept estimates, $\hat\beta_{j,0}$ and the full sample estimate $\hat\beta_{n,0}$, i.e., $\hat \beta_{n,j}=\hat\beta_{j,0}-\hat\beta_{n,0}$. Such strategies unfortunately introduce dependence between the $q$ estimators of $\beta$ (as they all depend on $\hat\beta_{n,0}$) and thus end up violating one of the two main conditions needed for ARTs to be asymptotically valid; mainly condition \eqref{eq:assump-2}. \qed
\end{remark}

Another case where the lack of identification within cluster may manifest is when the variable of interest actually varies within clusters but the model specification involves other variables that are collinear with some other variable (including the variable of interest or the constant term) within clusters. For example, consider the model in \eqref{eq:main} where instead of individuals indexed by $i\in I_{n,j}$, units within cluster are indexed over time $t\in T$. Partition $Z_{j,t}$ into the variable of interest, $Z^{(1)}_{j,t}$, and time fixed effects $\delta_t$. That is, 
\begin{equation}\label{eq:reg-issue-2}
    Y_{j,t} = Z^{(1)}_{j,t}\beta_1 + \sum_{\tilde t\in T}I\{\tilde t=t\}\delta_{\tilde t}+ \epsilon_{j,t}~.
\end{equation}
It then follows that, within each cluster $j\in J$, the time fixed effect $\delta_t$ absorbs all the variation in $Z^{(1)}_{j,t}$ and so $\beta_1$ is not identified. In cases like this the analyst could again combine clusters to obtain variation within the re-defined clusters. An alternative remedy is to change the specification by, for example, replacing the time fixed effect with a cluster-specific time trend. Such specification is more restrictive than the time fixed effect in the sense that it imposes a linear trend but, at the same time, is more general as it allows for heterogeneity across clusters in the linear trend. We illustrate this approach in the application we consider in Section \ref{sec:MQY}. 

The need to identify $\beta$ within each cluster is in our view the main limitation of ARTs, but a limitation that needs to be dealt with in certain settings. One may then wonder why not simply use some other inference method that is valid when the number of clusters is small and that does not rely on estimating $\beta$ cluster-by-cluster. Perhaps the most popular approach in that category is the cluster wild bootstrap popularized by \cite{cameron2008bootstrap} and recently studied formally by \cite{canay/santos/shaikh:20}. While not having to estimate $\beta$ within each cluster represents an advantage over ARTs, this additional flexibility comes at a cost in terms of the degree of heterogeneity that the model can deal with. In particular, the results in  \cite{canay/santos/shaikh:20} show that the cluster wild bootstrap is expected to work well in settings with a small number of clusters \emph{as long as} the clusters are ``homogeneous,'' in a  sense made precise in \cite{canay/santos/shaikh:20}. Intuitively, it is required that the variance covariance matrix $\hat \Omega_{n,j}$ defined in \eqref{eq:Omega-hat-j} is the same across clusters (up to scalar multiplication). Such stringent homogeneity condition is not required for ARTs to work well, as the method allows clusters to be arbitrarily heterogeneous as long as $\hat \Omega_{n,j}$ is invertible for $j\in J$. 

\begin{remark}\label{rem:c-prime-beta}
\rm For ease of exposition, we have written the requirement in \eqref{eq:assump-1} in terms of the differences $\hat \beta_{n,j} - \beta$, but it is possible to replace it with the differences $c’\hat \beta_{n,j} - c’\beta$ (or $R\hat \beta_{n,j} - R\beta$, depending on the null hypotheses of interest).  In most cases, re-writing the condition in this way is not useful, but it is in cases where $c’\beta$ is identified within each cluster while $\beta$ is not. For example, consider the model in \eqref{eq:reg-issue-1} when the coefficient of interest is $\beta_2$ as opposed to $\beta_1$, i.e, $c'\beta = \beta_2$. In that case the entire term $\beta_0+Z_j^{(1)}\beta_1$ may be absorbed into a cluster-specific intercept without affecting the identification and estimation of $c'\beta = \beta_2$ within each cluster. \qed
\end{remark}

\section{Empirical applications}\label{sec:applications}

In this section we apply ARTs as described in Algorithm \ref{algo:crs1} and ART-based confidence intervals as described in Algorithm \ref{algo:crs3} in the context of two distinct empirical applications. The \texttt{R} and \texttt{Stata} packages and codes required to replicate the results in this section are available as part of the online supplemental material.  

\subsection{Meng, Qian and Yared (2015)}\label{sec:MQY}
\citet[][MQY]{mengQY2015} argue that China's Great Famine, from 1959 to 1961, was the result of an inflexible food procurement policy by the central government. To make this point, they show that food production and mortality become positively correlated during the time of famine, when this coefficient is otherwise negative or not significantly different from 0 in normal times. 

MQY consider the following regression,
	\begin{equation*}
	Y_{j,t+1} = Z^{(1)}_{j,t}\beta_1 + Z^{(2)}_{j,t}\beta_2 + \delta_t + \epsilon_{j,t}
	\end{equation*}
where $j$ indexes provinces (ranging from 1 to 19) and $t$ indexes years (ranging from 1953 to 1982). Here, 
\begin{align*}
	Y_{j, t+1} & = \log(\text{number of deaths in province $j$ during year $t+1$}) \\
	Z^{(1)}_{j,t} & = \log(\text{predicted grain production in province $j$ during year $t$}) \\ & \qquad \times I\{\text{$t$ is a famine year} \} \\
	Z^{(2)}_{j,t} & = \log(\text{predicted grain production in province $j$ during year $t$}) \\
	\delta_t & = \text{time fixed effects}	~.
\end{align*}
In this application the level of clustering is a province, and so in order to apply ARTs as described in Section \ref{sec:algorithm}, one needs to estimate $\beta=(\beta_1,\beta_2)'$ and $\delta_t$ province-by-province. This illustrates one of the situations where including time fixed effects province-by-province is infeasible for the implementation of ARTs, given that the only source of remaining variation within a province is indeed time. The second identification problem described in Section \ref{sec:requirements} then arises. As we discussed in that section, one way to deal with this issue consists of replacing the time fixed effects with a cluster-specific time trend, i.e., in Step 1 of Algorithm \ref{algo:crs1} estimate
	\begin{equation}\label{eq:mqy-regression}
	Y_{j,t+1} = Z^{(1)}_{j,t}\beta_1 + Z^{(2)}_{j,t}\beta_2 + \gamma_j t + \epsilon_{j,t}~.
	\end{equation}
We will refer to this as Analysis \#1. In addition, we also consider the following alternative specifications studied by MQY:
\begin{itemize}
	\item Analysis \#2: Repeating Analysis \#1 using only data between 1953 and 1965.
	\item Analysis \#3: Repeating Analysis \#1 using four additional autonomous provinces.
	\item Analysis \#4: Repeating Analysis \#2 using four additional autonomous provinces.
	\item Analysis \#5: Repeating Analysis \#1 using actual rather than constructed grain production. 
	\item Analysis \#6: Repeating Analysis \#2 using actual rather than constructed grain production. 
\end{itemize}
As with Analysis \#1, the above analyses differ from their MQY counterparts only in that a linear time trend $\gamma_j t$ replaces time fixed effects $\delta_t$. Table \ref{tab:mqy_summary} summarizes the number of clusters and the number of observations for each of these analyses.  We caution, however, that in this application, in addition to the number of clusters being small, the number of observations within each cluster may also be small.  See Remark \ref{rem:large-clusters} for further discussion in relation to Assumption \ref{ass:CRS3.1}(a).

% Note that in this application the number of clusters is small, but the number of observations within the clusters is also small in some of the analyses we consider. This may cast doubts on the quality of the asymptotic approximation given Assumption \ref{ass:CRS3.1}, so we recommend practitioners to always report both the number of clusters and the number of observations within the clusters.

\begin{table}[htbp]
  	\begin{center}
     \scalebox{0.77}{
    \begin{tabular}{cccccc}
    \toprule
    Analysis & \# of Clusters & Min. Size & Med. Size & Max. Size & Mean \\
    \midrule
    \#1, \#5& 19  & 29  & 30  & 30  & 29.95  \\
    \#2, \#6& 19  & 12  & 13  & 13  & 12.95  \\
    \#3  	& 23  & 29  & 30  & 30  & 29.96  \\
    \#4  	& 23  & 12  & 13  & 13  & 12.96  \\
    \bottomrule
    \end{tabular}
    }
    \end{center}
    \caption{\footnotesize Cluster Information. `Min. Size', `Med. Size', `Max. Size' denote the minimum, the median, and the maximum size of clusters.}
  \label{tab:mqy_summary}%
\end{table}

\cite{mengQY2015} consider the following two null hypotheses of interest, 
\begin{equation}\label{eq:mqy-nulls}
	H_0^{(1)}: \beta_1 = 0 \quad \text{ and } \quad H_0^{(2)}: \beta_1 + \beta_2 = 0~.
\end{equation}
In Table \ref{tab:mqy_results} we replicate the main table in \cite{mengQY2015} using cluster robust standard errors (CCE) and also include the results associated with ARTs for both $H_0^{(1)}$ and $H_0^{(2)}$ in \eqref{eq:mqy-nulls}. For $H_0^{(1)}$ we report $p$-values and 95\% confidence intervals, while for $H_0^{(2)}$ we just report $p$-values following MQY. The authors note in footnote 33 that using the cluster wild bootstrap led to similar results as those presented in their main table so we do not include cluster wild bootstrap results here either. 

\begin{table}[t]
  \begin{center}
  \scalebox{0.77}{
   \begin{tabular}{ccccccc}
    \toprule
          & \#1  & \#2  & \#3  & \#4  & \#5  & \#6 \\
    \midrule
    \multicolumn{1}{l}{LS Estimate: $\beta_1$} & 0.063 & 0.057 & 0.071 & 0.067 & 0.064 & 0.058 \\
          &       &       &       &       &       &  \\
    \multicolumn{1}{l}{CCE: Province} &       &       &       &       &       &  \\
    \multicolumn{1}{r}{se} & 0.007 & 0.007 & 0.007 & 0.008 & 0.007 & 0.007 \\
    \multicolumn{1}{r}{$p$-value} & 0.000 & 0.000 & 0.000 & 0.000 & 0.000 & 0.000 \\
    \multicolumn{1}{r}{95\% CI} & [0.050, 0.077] & [0.043, 0.071] & [0.057, 0.086] & [0.051, 0.083] & [0.051, 0.078] & [0.044, 0.071] \\
          &       &       &       &       &       &  \\
    \multicolumn{1}{l}{ART} &       &       &       &       &       &  \\
    \multicolumn{1}{r}{$p$-value} & 0.000 & 0.002 & 0.000 & 0.000 & 0.000 & 0.000 \\
    \multicolumn{1}{r}{95\% CI} & [0.032, 0.055] & [0.018, 0.047] & [0.038, 0.066] & [0.028, 0.067] & [0.032, 0.058] & [0.029, 0.050] \\
          &       &       &       &       &       &  \\
    \multicolumn{1}{l}{$\beta_1 + \beta_2 = 0$} &       &       &       &       &       &  \\
    \multicolumn{1}{r}{CCE $p$-value} & 0.050 & 0.009 & 0.059 & 0.005 & 0.266 & 0.363 \\
    \multicolumn{1}{r}{ART $p$-value} & 0.098 & 0.571 & 0.096 & 0.487 & 0.080 & 0.001 \\
    \midrule
    Observations & 569   & 246   & 689   & 298   & 569   & 246 \\
    Short Sample & No    & Yes   & No    & Yes   & No    & Yes \\
    Auto. Region & No    & No    & Yes   & Yes   & No    & No \\
    Pred. Grain Prod. & Yes   & Yes   & Yes   & Yes   & No    & No \\
    \bottomrule
    \end{tabular}%
  }
  \end{center} 
      \caption{\footnotesize Results for Analyses \#1-6, comparable to those in Table 2 of Meng, Qian and Yared (2015). `LS Estimate' denotes the full sample OLS estimate for $\beta_1$. CCE refers to cluster-robust standard errors. ART $p$-values are obtained using Algorithm \ref{algo:crs1}. ART-based 95\% confidence intervals are obtained using Algorithm \ref{algo:crs3}. }
  \label{tab:mqy_results}%
\end{table}%

We comment on the following main features of Table \ref{tab:mqy_results}:
\begin{enumerate}
	\item For the null hypothesis $H_0^{(1)}$ associated with the parameter $\beta_1$, the ART $p$-values are of comparable magnitude to traditional CCE $p$-values. Similarly, ART-based confidence intervals are of roughly the same length as those obtained based on CCE although the ART-based confidence intervals do not contain the LS estimates. This is because ART-based confidence intervals are centered around the mean of the province-by-province estimates, which may not necessarily be equal to the full sample LS estimate of $\beta_1$. 
	\item For the null hypothesis $H_0^{(2)}$ associated with the parameter $\beta_1 + \beta_2$, the ART $p$-value is sometimes higher and sometimes lower than the CCE $p$-value depending on the specification. Given the relatively small number of clusters in this application, the ART $p$-values are likely to be more reliable than those associated with CCE as CCE is known to perform poorly when the number of clusters is not sufficiently large. 
\end{enumerate}

\subsection{Munyo and Rossi (2015)}\label{sec:MR}
\cite{munyo2015first} study criminal recidivism of former prisoners by looking at the relationship between the number of inmates released from incarceration on a given day and the number of offenses committed on the same day. They claim that the liquidity constraints that inmates face on the day of release increase the likelihood of recidivism on the same day. Using data of 2631 days between January 1st 2004 and March 15 2011 collected from the criminal incidents reports in Montevideo in Uruguay, they estimate the following linear model by least squares
\begin{align*}
Y_{t} = Z_{t}'\beta + \epsilon_{t} 
\end{align*}
where $t$ indexes days and 
\begin{align*}
Y_{t} &= \text{the total number of offenses on day }t\\
Z_{t} &= \text{the total number of inmates released, temperature, rainfall, hours of sunshine}\\
 &{} \text{on day $t$, a dummy for holidays, a dummy for December 31st and a yearly trend.}
\end{align*}
We refer to this as Analysis \#1. \cite{munyo2015first} additionally consider the following four analyses:
\begin{itemize}
\item Analysis \#2: $Z_{t}$ includes a daily trend in place of a yearly trend.
\item Analysis \#3: $Z_{t}$ includes a monthly trend in place of a yearly trend.
\item Analysis \#4: $Z_{t}$ includes an intra-month daily trend, month- and year- level fixed effects and their interactions in place of a yearly trend.
\item Analysis \#5: $Z_{t}$ includes month- and year- level fixed effects and their interactions in place of a yearly trend.
\end{itemize}
Analysis \#5 is their preferred specification. \cite{munyo2015first} report the results of these analyses in Table 2 in their paper. They report least squares estimates of $\beta$ with Newey-West heteroskedasticity-autocorrelation-consistent (HAC) standard errors. In addition, they report ART $p$-values as described in Algorithm \ref{algo:crs1} for the null hypothesis that $H_0: c'\beta=0$ as in \eqref{eq:H0-1}, where $c$ selects the coefficient on the total number of inmates released on day $t$. 

In this application the level of clustering is not naturally determined by the data, but pseudo-clusters may be formed using blocks of consecutive observations under the assumption of weak temporal dependence. In order to apply ARTs as described in Algorithm \ref{algo:crs1} we then form $q$ pseudo-clusters by dividing the data into $q$ consecutive blocks of size $b_n = \lfloor n/q \rfloor$ where $n=2631$ is the number of total observations. More concretely, we define the $j$th pseudo-cluster as
\begin{align*}
	X^{(n)}_j = \{(Y_t, Z'_t)': t=(j-1)b_n + 1, \cdots, jb_n\} \quad \text{where}\quad j=1, \cdots, q-1~,
\end{align*}
and let the last $q$th pseudo-cluster contain all the remaining $n- b_n(q-1)$ observations. Note that in this application the number of pseudo-clusters $q$ is a tuning parameter that the analyst must specify. \cite{munyo2015first} set $q=10$. We repeat their analyses with alternative values of $q$ and investigate how sensitive the results are to this choice. The relevant cluster information is given in Table \ref{tab:MRclustersize}.

\begin{table}[htbp]
  \centering
    \begin{tabular}{cc}
    \toprule
    \# of Clusters (q) & Cluster Size \\ \hline
    8     & 328 \\
    10    & 263 \\
    16    & 164 \\
    \bottomrule
    \end{tabular}%
   \caption{\footnotesize Pseudo-cluster size for different values of $q$.}
\label{tab:MRclustersize}%
\end{table}

\begin{table}
  \centering
  \footnotesize
    \begin{tabular}{lccccc}
    \toprule
    \multicolumn{1}{c}{Specification} & \#1   & \#2   & \#3   & \#4   & \#5 \\
    \midrule
    LS Estimate & 0.225 & 0.260 & 0.259 & 0.225 & 0.234 \\
          &       &       &       &       &  \\
    HAC   &       &       &       &       &  \\
    \multicolumn{1}{r}{se} & 0.124 & 0.123 & 0.123 & 0.096 & 0.096 \\
    \multicolumn{1}{r}{$p$-value} & 0.068 & 0.034 & 0.034 & 0.019 & 0.015 \\
    \multicolumn{1}{r}{95\% CI} & [-0.017, 0.468] & [0.02, 0.5] & [0.019, 0.5] & [0.038, 0.413] & [0.046, 0.421] \\
    ART: q=8 &       &       &       &       &  \\
    \multicolumn{1}{r}{$p$-value} & 0.008 & 0.023 & 0.023 & 0.102 & 0.102 \\
    \multicolumn{1}{r}{95\% CI} & [0.124, 0.429] & [0.035, 0.391] & [0.035, 0.391] & [-0.07, 0.397] & [-0.067, 0.418] \\
    ART: q=10 &       &       &       &       &  \\
    \multicolumn{1}{r}{$p$-value} & 0.002 & 0.014 & 0.014 & 0.063 & 0.053 \\
    \multicolumn{1}{r}{95\% CI} & [0.141, 0.603] & [0.068, 0.446] & [0.068, 0.458] & [-0.023, 0.431] & [-0.003, 0.452] \\
    ART: q=16 &       &       &       &       &   \\
    \multicolumn{1}{r}{$p$-value} & 0.002 & 0.006 & 0.006 & 0.027 & 0.010 \\
    \multicolumn{1}{r}{95\% CI} & [0.131, 0.444] & [0.097, 0.369] & [0.087, 0.371] & [0.02, 0.324] & [0.056, 0.367] \\
          &       &       &       &       &  \\
    \midrule
    Observations & 2631  & 2631  & 2631  & 2631  & 2631 \\
    Time Trend & Year  & Day   & Month & Intra-month Day & None \\
    Time Fixed Effect & No    & No    & No    & Yes   & Yes \\
    Controls & No    & No    & No    & No    & No \\
    \bottomrule
    \end{tabular}%

  \caption{\footnotesize Results for Analyses \#1-5, comparable to those in Table 2 of \cite{munyo2015first}. `LS Estimate' denotes the full sample LS estimate of $\beta$. HAC refers to the heteroskedasticity and autocorrelation consistent standard error. ART $p$-values are obtained using Algorithm \ref{algo:crs1}. ART-based 95\% confidence intervals are obtained using Algorithm \ref{algo:crs3}. }
  \label{tab:munyorossiART}%
\end{table}

Table \ref{tab:munyorossiART} shows LS estimates of $\beta$, $p$-values for the hypothesis in \eqref{eq:H0-1}, and $95\%$ confidence intervals for each analysis. Following \cite{munyo2015first}, we report results based on HAC standard errors. The table also shows ART $p$-values as described in Algorithm \ref{algo:crs1} and ART-based 95\% confidence intervals as described in Algorithm \ref{algo:crs3} for $q=8$, $q=10$, and $q=16$. 

We summarize the main findings of the results in Table \ref{tab:munyorossiART} as follows:
\begin{enumerate}
\item The choice of $q$ is important for the results of ARTs but currently there is no theory developed to choose this tuning parameter according to some data dependent criteria. The smaller $q$ is, the more observations are available within each cluster. Having more observations per cluster is important for one of the requirements behind ARTs, mainly \eqref{eq:assump-1}. A small value of $q$, however, tends to affect the power of ARTs despite not really affecting the control of the rejection probability under the null hypothesis. This feature can be seen in Table \ref{tab:munyorossiART}, where ARTs $p$-values are decreasing in $q$ across different specifications. In this application, where there are still over a hundred observations when $q=16$, a larger value of $q$ like $q=10$ or $q=16$ may be preferable to smaller values, like $q=8$, based on power considerations. Note, however, that except in Analyses \#4--5, where the choice of $q$ determines whether the null hypothesis is rejected at a given significance level, the results for Analyses \#1--3 are in all agreement at a $5\%$ level.

\item Overall, the test results based on standard $t$-test with HAC standard errors are consistent to those of ARTs when $q=16$. Both methods reject the null hypothesis $H_0:c'\beta=0$ at a $10\%$ nominal level across different specifications. The results support the authors' argument that the release of inmates from incarceration increase the chance of re-offenses on the day of release.
\end{enumerate}

\subsection{Computational gains of the new algorithm}\label{sec:gains}
Tables \ref{tab:computational-gains-mqy} and \ref{tab:computational-gains-mr} report four alternative ways to compute ART-based confidence intervals in the two empirical applications we consider in this paper; \cite{mengQY2015} and \cite{munyo2015first}. The first alternative is to compute the confidence intervals by a simple grid search algorithm. The second alternative involves a bi-section algorithm. We implement both of these methods using a studentized and an unstudentized test statistic to illustrate the result in Section \ref{sec:studentized}. The last alternative is to simply use Algorithm \ref{algo:crs3}, as reported in Sections \ref{sec:MQY} and \ref{sec:MR}. In each case, we also report computational times to illustrate the computational advantages of the algorithm we propose in this paper.  The \texttt{R} and \texttt{Stata} codes required to replicate the results in this section are available as part of the online supplemental material.

Starting from Table \ref{tab:computational-gains-mqy}, we see that grid search take a significant amount of time to compute. Our convexity result (Lemma \ref{lem:p-value}) facilitates the use of the bisection method, cutting implementation time by a factor of over 50. Moving from the bisection method to Algorithm \ref{algo:crs3} further leads to a speed up of at least 2 times. A similar pattern emerges in Table \ref{tab:computational-gains-mr}. Furthermore, comparing specification with $q=8$ that with $q=16$, the speed advantage of our method becomes far starker. For $q=16$, grid search takes almost 100 times as long as the bisection method. The bisection method, meanwhile, takes close to 10 times as long as Algorithm \ref{algo:crs3}. 

\begin{table}[htbp]
  \centering \footnotesize
    \begin{tabular}{cccccc}\hline\hline
          & \multicolumn{2}{c}{Grid Search} & \multicolumn{2}{c}{Bisection} & \multirow{2}[1]{*}{ART} \\
          & Stud. & Unstud. & Stud. & Unstud. &  \\
    \midrule
    \multirow{2}[2]{*}{\#1} & [0.032, 0.055] & [0.032, 0.055] & [0.032, 0.055] & [0.032, 0.055] & [0.032, 0.055] \\
          & 19.65 & 6.66  & 0.31  & 0.11  & 0.06 \\
    \midrule
    \multirow{2}[2]{*}{\#2} & [0.018, 0.047] & [0.018, 0.047] & [0.018, 0.047] & [0.018, 0.047] & [0.018, 0.047] \\
          & 46.63 & 16.43 & 0.35  & 0.12  & 0.02 \\
    \midrule
    \multirow{2}[2]{*}{\#3} & [0.038, 0.066] & [0.038, 0.066] & [0.038, 0.066] & [0.038, 0.066] & [0.038, 0.066] \\
          & 24.50 & 8.67  & 0.30  & 0.09  & 0.03 \\
    \midrule
    \multirow{2}[2]{*}{\#4} & [0.028, 0.067] & [0.028, 0.067] & [0.028, 0.067] & [0.028, 0.067] & [0.028, 0.067] \\
          & 62.52 & 21.56 & 0.34  & 0.11  & 0.02 \\
    \midrule
    \multirow{2}[2]{*}{\#5} & [0.032, 0.058] & [0.032, 0.058] & [0.032, 0.058] & [0.032, 0.058] & [0.032, 0.058] \\
          & 19.47 & 6.76  & 0.27  & 0.11  & 0.01 \\
    \midrule
    \multirow{2}[1]{*}{\#6} & [0.029, 0.050] & [0.029, 0.050] & [0.029, 0.050] & [0.029, 0.050] & [0.029, 0.050] \\
          & 23.32 & 8.26  & 0.30  & 0.11  & 0.01 \\ \hline\hline
    \end{tabular}%
   \caption{\footnotesize Computational gains of Algorithm \ref{algo:crs3} relative to grid search and bisection algorithms in the applications of Section \ref{sec:MQY}. The top row for each specification is the confidence interval. The bottom row is time in seconds. For the bisection search, our tolerance is set to the absolute value of the LS estimate, divided by 1000. For comparability, we set the step-size of the grid search to the same value. }
\label{tab:computational-gains-mqy}%
\end{table}%

% Table generated by Excel2LaTeX from sheet 'mr_efficiency_result'
\begin{table}[htbp]
  \centering \footnotesize
    \begin{tabular}{ccccccc}\hline\hline
          &       & \multicolumn{2}{c}{Grid Search} & \multicolumn{2}{c}{Bisection} & \multirow{2}[1]{*}{ART} \\
          &       & Stud. & Unstud. & Stud. & Unstud. &  \\
    \midrule
    \multirow{10}[10]{*}{$q=8$} & \multirow{2}[2]{*}{\#1} & [0.124, 0.429] & [0.124, 0.429] & [0.124, 0.429] & [0.124, 0.429] & [0.124, 0.429] \\
          &       & 2.62  & 0.92  & 0.11  & 0.02  & 0.06 \\
\cmidrule{2-7}          & \multirow{2}[2]{*}{\#2} & [0.035, 0.391] & [0.035, 0.391] & [0.036, 0.391] & [0.036, 0.391] & [0.035, 0.391] \\
          &       & 3.85  & 1.40  & 0.08  & 0.03  & 0.00 \\
\cmidrule{2-7}          & \multirow{2}[2]{*}{\#3} & [0.035, 0.391] & [0.035, 0.391] & [0.035, 0.390] & [0.035, 0.390] & [0.035, 0.390] \\
          &       & 4.09  & 1.50  & 0.08  & 0.03  & 0.00 \\
\cmidrule{2-7}          & \multirow{2}[2]{*}{\#4} & [-0.070,  0.397] & [-0.070,  0.397] & [-0.070,  0.397] & [-0.070,  0.397] & [-0.070,  0.397] \\
          &       & 9.29  & 3.37  & 0.11  & 0.03  & 0.00 \\
\cmidrule{2-7}          & \multirow{2}[2]{*}{\#5} & [-0.067,  0.418] & [-0.067,  0.418] & [-0.067,  0.418] & [-0.067,  0.418] & [-0.067,  0.418] \\
          &       & 9.20  & 3.18  & 0.07  & 0.04  & 0.01 \\
    \midrule
    \multirow{10}[10]{*}{$q=10$} & \multirow{2}[2]{*}{\#1} & [0.141, 0.603] & [0.141, 0.603] & [0.141, 0.603] & [0.141, 0.603] & [0.141, 0.603] \\
          &       & 30.19 & 10.41 & 0.33  & 0.11  & 0.01 \\
\cmidrule{2-7}          & \multirow{2}[2]{*}{\#2} & [0.068, 0.446] & [0.068, 0.446] & [0.068, 0.446] & [0.068, 0.446] & [0.069, 0.445] \\
          &       & 26.61 & 9.34  & 0.33  & 0.13  & 0.00 \\
\cmidrule{2-7}          & \multirow{2}[2]{*}{\#3} & [0.067, 0.458] & [0.067, 0.458] & [0.068, 0.458] & [0.068, 0.458] & [0.068, 0.458] \\
          &       & 28.37 & 9.78  & 0.33  & 0.11  & 0.02 \\
\cmidrule{2-7}          & \multirow{2}[2]{*}{\#4} & [-0.024,  0.431] & [-0.024,  0.431] & [-0.024,  0.430] & [-0.024,  0.430] & [-0.023,  0.430] \\
          &       & 32.75 & 11.47 & 0.32  & 0.11  & 0.02 \\
\cmidrule{2-7}          & \multirow{2}[2]{*}{\#5} & [-0.003,  0.452] & [-0.003,  0.452] & [-0.003,  0.452] & [-0.003,  0.452] & [-0.003,  0.451] \\
          &       & 31.67 & 11.02 & 0.34  & 0.11  & 0.02 \\
    \midrule
    \multirow{10}[10]{*}{$q=16$} & \multirow{2}[2]{*}{\#1} & [0.124, 0.447] & [0.124, 0.447] & [0.124, 0.447] & [0.124, 0.447] & [0.124, 0.447] \\
          &       & 373.86 & 127.07 & 3.19  & 1.11  & 0.13 \\
\cmidrule{2-7}          & \multirow{2}[2]{*}{\#2} & [0.098, 0.364] & [0.098, 0.364] & [0.098, 0.364] & [0.098, 0.364] & [0.097, 0.364] \\
          &       & 451.67 & 153.20 & 3.16  & 1.11  & 0.14 \\
\cmidrule{2-7}          & \multirow{2}[2]{*}{\#3} & [0.088, 0.368] & [0.088, 0.368] & [0.088, 0.368] & [0.088, 0.368] & [0.087, 0.368] \\
          &       & 415.67 & 142.55 & 3.25  & 1.10  & 0.16 \\
\cmidrule{2-7}          & \multirow{2}[2]{*}{\#4} & [0.014, 0.325] & [0.014, 0.325] & [0.015, 0.325] & [0.015, 0.325] & [0.014, 0.325] \\
          &       & 703.13 & 248.68 & 3.43  & 1.14  & 0.11 \\
\cmidrule{2-7}          & \multirow{2}[2]{*}{\#5} & [0.048, 0.365] & [0.048, 0.365] & [0.048, 0.365] & [0.048, 0.365] & [0.047, 0.365] \\
          &       & 572.54 & 193.69 & 3.09  & 1.10  & 0.14 \\
    \hline\hline
    \end{tabular}%
   \caption{\footnotesize Computational gains of Algorithm \ref{algo:crs3} relative to grid search and bisection algorithms in the applications of Section \ref{sec:MR}. The top row for each specification is the confidence interval. The bottom row is time in seconds. For the bisection search, our tolerance is set to the absolute value of the LS estimate, divided by 1000. For comparability, we set the step-size of the grid search to the same value. }
\label{tab:computational-gains-mr}%
\end{table}%

\section{Concluding remarks}\label{sec:conclude}
The goal of this paper is to make the general theory developed in \cite{canay/etal:17} more accessible by providing a step-by-step algorithmic description of how to implement the test and construct confidence intervals in linear regression models with clustered data, as well as clarifying the main requirements and limitations of the approach. The main two takeaways are the following. First, ARTs-based confidence intervals for scalar parameters in linear regression models can be characterized in closed form and thus are straightforward to implement in practice. Algorithms \ref{algo:crs1} and \ref{algo:crs3} provide a clear explanation of how to apply ARTs in linear models, and the companion {\tt Stata} and {\tt R} packages available as part of the supplemental material are intended to facilitate doing so. Second, our discussion on the main requirements behind ARTs hopefully show that understanding the trade-offs between ARTs and other popular alternatives for inference with a small number of clusters, like the cluster wild bootstrap, is fundamental for practitioners to choose a method that aligns well with the features of their application. In particular, while ARTs essentially demand that the parameter of interest is suitably estimable cluster-by-cluster without imposing restrictions on the degree of heterogeneity across clusters, the cluster wild bootstrap requires the clusters to be sufficiently homogeneous \citep[see][]{canay/santos/shaikh:20} without demanding identification of the parameter of interest cluster-by-cluster. 

% - -- - - - - - - - - - -
% HERE STARTS THE APPENDIX
% - -- - - - - - - - - - -
%\newpage
\renewcommand{\theequation}{\Alph{section}-\arabic{equation}}

\setcounter{lemma}{0}
\setcounter{theorem}{0}
\setcounter{corollary}{0}
\setcounter{equation}{0}
\setcounter{remark}{0}

\begin{appendices}
\begin{small}	
% nothing
\end{small}
\end{appendices}

\phantomsection
\bibliography{crs_ref}
\addcontentsline{toc}{section}{References}

\end{document}